\newcommand*{\thead}[1]{\multicolumn{1}{c}{\bfseries #1}}
\newcommand{\specialcell}[2][l]{%
  \begin{tabular}[#1]{@{}l@{}}#2\end{tabular}}
\newcommand{\dist}[2][]{\mathbf{dist}_{#1}(#2)}
\newcommand{\footremember}[2]{%
    \footnote{#2}
    \newcounter{#1}
    \setcounter{#1}{\value{footnote}}%
}
\title{On the complexity of the (approximate) nearest colored node problem}
\author{Maximilian Probst \footremember{cph}{BARC, University of Copenhagen, Universitetsparken 5, Copenhagen 2100, Denmark, The author is supported by Basic Algorithms Research Copenhagen (BARC), supported by Thorup's Investigator Grant from the Villum Foundation under Grant No. 16582.}}
\newtheorem{lemma}{Lemma}
\newtheorem{theorem}{Theorem}
\newtheorem{definition}{Definition}
\begin{document}

\maketitle

\begin{abstract}
Given a graph $G=(V,E)$ where each vertex is assigned a color from the set $C=\{c_1, c_2, .., c_\sigma\}$. In the (approximate) nearest colored node problem, we want to query, given $v \in V$ and $c \in C$, for the (approximate) distance $\widehat{\mathbf{dist}}(v, c)$ from $v$ to the nearest node of color $c$. For any integer $1 \leq k \leq \log n$, we present a Color Distance Oracle (also often referred to as Vertex-label Distance Oracle) of stretch $4k-5$ using space $O(kn\sigma^{1/k})$ and query time $O(\log{k})$. This improves the query time from $O(k)$ to $O(\log{k})$ over the best known Color Distance Oracle by Chechik \cite{DBLP:journals/corr/abs-1109-3114}.

We then prove a lower bound in the cell probe model showing that even for unweighted undirected paths any static data structure that uses space $S$ requires at least $\Omega\left(\log\frac{\log{\sigma}}{\log(S/n)+\log\log{n}}\right)$ query time to give a distance estimate of stretch $O(\text{polylog}(n))$. This implies for the important case when $\sigma = \Theta(n^{\epsilon})$ for some constant $0 < \epsilon < 1$, that our Color Distance Oracle has asymptotically optimal query time in regard to $k$, and that recent Color Distance Oracles for trees \cite{tsur2016succinct} and planar graphs \cite{DBLP:journals/corr/MozesS15} achieve asymptotically optimal query time in regard to $n$.

We also investigate the setting where the data structure additionally has to support color-reassignments. We present the first Color Distance Oracle that achieves query times matching our lower bound from the static setting for large stretch yielding an exponential improvement over the best known query time \cite{DBLP:journals/corr/abs-1305-3314}. Finally, we give new conditional lower bounds proving the hardness of answering queries if edge insertions and deletion are allowed that strictly improve over recent bounds in time and generality.
\end{abstract}

\section{Introduction}

In the \textit{static} nearest colored node problem, we are given a graph $G=(V,E)$ and a color set $C = \{c_1, c_2, .., c_\sigma\}$ with a function $c : V \xrightarrow{} C$ mapping each vertex to a color in $C$ and we want to compute and store the distance $\mathbf{dist}_G(u, c)$ denoting the distance from each vertex $u \in V$ to the nearest vertex $v$ of color $c \in C$ or more formally $\mathbf{dist}_G(u, c) = \mathbf{min}_{v \in V_c} \mathbf{dist}(u,v)$ where $V_c$ denote the set of $c$-colored vertices. Clearly, if the color set $C$ is of small cardinality, computing distances and storing a distance matrix is an efficient measure but for large $\sigma$, i.e. $\sigma = \Omega(n^{\epsilon})$ for some $\epsilon > 0$, this solution becomes impractical for many important applications. The problem has various applications in navigation, routing and document processing, often in connection to locating resources or facilities, e.g. the nearest gas station, quickly. 

In this article, we present a new data structure that reports for any given $v \in V, c \in C$ a distance estimate $\widehat{\mathbf{dist}}(v,c)$ in $O(\log{k})$ time such that $\mathbf{dist}(v,c) \leq \widehat{\mathbf{dist}}(v,c) \leq (4k-5)\mathbf{dist}(v,c)$ using $O(kn\sigma^{1/k})$ space for every positive integer $k$, improving on $O(k)$ query time of the previous data structure with same space consumption and stretch factor \cite{DBLP:journals/corr/abs-1109-3114}. We refer to the data structure as \textit{Color Distance Oracle}. In other literature, the problem is also studied as the vertex-to-label problem and the data structure is referred to as Vertex-label Distance Oracle \cite{Hermelin2011}\cite{DBLP:journals/corr/abs-1109-3114}. The term Distance Oracle originates from the classic Thorup-Zwick Distance Oracle \cite{Thorup:2005:ADO:1044731.1044732} that reports distance estimates for each pair of vertices. Thorup, Zwick and Roddity \cite{roditty2005deterministic} also generalized Distance Oracles by introducing source-restricted Distance Oracles where only distances in $S \times V$ can be queried for some subset $S \subseteq V$ and that uses space $O(n|S|^{1/k})$ for stretch $2k-1$. It is tempting to approach the nearest colored node problem by using an auxiliary vertex for each color $c \in C$ linking it to all $c$-colored vertices with a zero-weight edge and let the auxiliary vertices define the set $S$. Unfortunately, this might decrease some vertex-to-color distances by creating "portals" through the auxiliary vertices. Instead the underlying sampling techniques can, together with a more advanced analysis stemming from Compact Routing  Schemes \cite{Thorup:2001:CRS:378580.378581}, be used to construct efficient and correct Color Distance Oracles. Color Distance Oracles can also be seen as a generalization of (source-restricted) Distance Oracles as we can choose a color set of size $|S|$ and assign each vertex in $S$ a unique color.

\begin{table}
\scriptsize
\begin{tabular}{lllll} 
 \thead{Graph Family} & \thead{Approximation} & \thead{Space} & \thead{Query Time} & \thead{Ref}  \\ \hline
\specialcell[l]{Unweighted\\undirected path} & $O(\text{polylog}(n))$ & $S$ & $\Omega\left(\log\frac{\log{\sigma}}{\log(S/n)+\log\log{n}}\right)$ & \textbf{New}\\ \hline
 Tree & Exact & $O(n)$ & $O\left(\log\frac{\log{\sigma}}{\log{w}}\right)$ & \cite{tsur2016succinct}  \\ \hline
 Planar graphs & $1+\epsilon$ & $O(n \log n )$  & $O(\log\log n) $ & \cite{DBLP:journals/corr/MozesS15}   \\ \hline
 \specialcell[l]{Planar\\ digraphs} & $1+\epsilon$ & $O(n \log n)$  & \specialcell[l]{$O(\log\log n$\\$\log\log(nN))$} & \cite{DBLP:journals/corr/MozesS15}   \\ \hline
 General graphs  & $4k-5$ & $O(kn\sigma^{1/k})$ & $O(\log{k})$ & \textbf{New}\\ \hline
 General graphs  & $8(1+\epsilon)k$ & $O(kn^{1+1/k}\log{n})$ & $O\left(\log\frac{\log{\sigma}}{\log{w}}\right)$ & \textbf{New}
\end{tabular}
\caption{Best upper and lower bounds for static Color Distance Oracles. $N$ denotes the weight of the heaviest edge in the graph. The $w$ in the second column refers to the word size and we assume in this article that $w = \Theta(\log n)$. $N$ refers to the heaviest edge weight in the graph.}
\label{fig:staticlandubounds}
\end{table}

We first present a Color Distance Oracle of space $O(kn\sigma^{1/k})$ that reports in $O(\log{k})$ time distance estimate of stretch at most $(4k-5)$. Our Color Distance Oracle matches space and stretch of the best data structure by Chechik \cite{DBLP:journals/corr/abs-1109-3114} and improves the query time from $O(k)$ to $O(\log{k})$. This is in fact achieved by combining Chechiks result with a well-known technique by Wulff-Nilsen \cite{DBLP:journals/corr/abs-1202-2336} for general distance oracles. Our contribution is to simplify the technique of Wulff-Nilsen by observing that a Range Minimum Query (RMQ) data structure can be used to replace his tailor-made data structure even more efficiently resulting in a concise and simple algorithm; and to generalize the proof technique of Wulff-Nilsen. Recently, Chechik has also shown that classic Distance Oracles can be implemented with constant query time \cite{chechik2015approximate}\cite{DBLP:journals/corr/abs-1305-3314} and it is natural to ask whether this improvement carries over to Color Distance Oracles. Our new lower bound rules out such an improvement and shows that our Color Distance Oracle has essentially tight query time when $\sigma = \Theta(n^{\epsilon})$ for constant $0 < \epsilon < 1$, as the lower bound then simplifies to $\Omega(\log(\epsilon k))$ for $S=n^{1+1/k}$ for all values of $k = O(\text{polylog}(n))$. Our result extends to prove asymptotic optimality in query time in regard to $n$ for the best known Color Distance Oracles for trees \cite{gawrychowski2016nearest}\cite{tsur2016succinct} and planar graphs \cite{DBLP:journals/corr/MozesS15} even for data structures with higher stretch $k = O(\text{polylog}(n))$. This lower bound, that is our main contribution, is thus a significant step in understanding Color Distance Oracles and their limitations. An overview over the best upper bounds for different graph families and our lower bound is given in table \ref{fig:staticlandubounds}.

We also present a new Color Distance Oracle for the setting where the data structure needs to handle color-reassignments, i.e. updates in which a vertex $v \in V$ is assigned a new color $c \in C$ such that afterwards $c(v) = c$. Our Color Distance Oracle is conceptually simple, building on some recent results in Ramsey theory\cite{DBLP:journals/corr/AbrahamCEFN17} and can be constructed deterministically. It strictly improves on query and update time for any approximation factor $k = \Omega(\frac{\log n}{\log\log n})$ and dominates existing data structures in query time for $k = \Omega(\log\log n)$. We are also able to show an elegant trade-off between query and update time that was unknown before. For $k = \Omega(\frac{\log n}{\log\log n})$ our data structure requires $\tilde{O}(n)$ space and updates only take polylogarithmic time. Therefore our static lower bound extends to this setting as we can start with an uncolored graph and color vertices in $n$ updates. This implies that our query time is tight with regard to $n$. Achieving query time $O({\log\log n})$ is rather surprising given that queries for exact distances take $\Omega(\frac{\log n}{\log\log n})$ time even on unweighted balanced trees and that for the static setting approximation doesn't admit any query time improvements. An overview over upper bounds and lower bounds for the color-reassignment setting is given in table  \ref{fig:dynlandubounds}.

\begin{table}[ht]
\scriptsize
\begin{tabular}{lllll} 
 \thead{Graph Family} & \thead{\specialcell{Approx-\\imation}} & \thead{Update Time}& \thead{Query Time} & \thead{Ref} \\ \hline
   \specialcell{Unweighted \\undirected path} & $O(\text{polylog}(n))$ & $O(\text{polylog}(n))$ & $\Omega\left(\log\frac{\log{\sigma}}{\log(S/n)+\log\log{n}}\right)$ & \textbf{New} \\ \hline
 \specialcell{Unweighted \\tree} & exact & $O(\text{polylog}(n))$ & $\Omega\left(\frac{\log(n)}{\log\log(n)}\right)$ & \cite{gawrychowski2016nearest} \\ \hline
 Tree & exact  & $O(\log n)$ & $O\left(\log n\right)$ & \cite{gawrychowski2016nearest}  \\
  & exact & $O(\log^{1+\epsilon} n )$ & $O\left(\frac{\log(n)}{\log\log(n)}\right)$ & \cite{gawrychowski2016nearest} \\ 
 \hline
 \specialcell{Planar\\ graphs} & $1+\epsilon$ & \specialcell[l]{$O(\epsilon^{-1} \log(n)$\\$ \log\log(n))$} & \specialcell[l]{$O(\epsilon^{-1} \log(n)\log(nN)$\\$\log\log(n))$} & \cite{laish2017efficient}    \\ 
  & $1+\epsilon$ & \specialcell[l]{$O\left(\epsilon^{-1} \frac{\log^2(\epsilon^{-1}n)}{\log\log(n)}\right)$} & \specialcell[l]{$O(\epsilon^{-1} \log^{1.51}(\epsilon^{-1}n)$)} & \cite{laish2017efficient}  \\ \hline
 \specialcell{Planar\\ digraphs} & $1+\epsilon$ & \specialcell[l]{$O(\epsilon^{-1} \log(n)$\\$\log\log(nN))$} & $O(\epsilon^{-1}\log^3(n)\log(nN))$ & \cite{laish2017efficient}  \\ \hline
 \specialcell{General\\ graphs} & $<3$ & $\Omega(n^{1-\epsilon})$&$\Omega(n^{2-\epsilon})$ & \cite{DBLP:journals/corr/HenzingerKNS15}\\
 &$4k-5$ & $O(kn^{1/k}\log^{1-1/k} n$ & $O(k)$ & \cite{DBLP:journals/corr/abs-1109-3114} \\ 
  &   & $\log\log n)$ & &  \\
  & $8(1+\epsilon)k$ & $O\left(\epsilon^{-1}kn^{1/k}\log\log n\right)$ & $O\left(\log\log n \right)$ & \textbf{New} \\
  & $8(1+\epsilon)k$ & $O\left(\log\log n\right)$ &  $O\left(\epsilon^{-1}kn^{1/k} \log\log n\right)$ & \textbf{New} \\
\end{tabular}
\caption{Best upper and lower bounds for Color Distance Oracles supporting color-reassignments. Here $N$ denote the heaviest edge weight in the graph. The lower bound from \cite{DBLP:journals/corr/HenzingerKNS15} applies to update or query time for every $\epsilon > 0$. More precisely, any algorithm with update time $\tilde\tilde{o}(n^{1-\epsilon})$ and $\tilde{\tilde{o}}(n^{2-\epsilon})$ would refute the OMv-conjecture.}
 \label{fig:dynlandubounds}
\end{table}

Finally, we prove that the \textit{dynamic} version of the problem, allowing edge insertions and deletions, cannot process updates in time $\tilde{\tilde{o}}(\sigma)$\footnote{We use the $\tilde{\tilde{o}}(f(n))$-notation as introduced by Henzinger et al. \cite{DBLP:journals/corr/HenzingerKNS15} to denote that the running time is in $O(f(n)^{1-\epsilon})$ for some $\epsilon > 0$. For multiple parameters we let $\tilde{\tilde{o}}(n_1 n_2 n_3)$ be equivalent to $O(n_1^{1-\epsilon}n_2n_3 + n_1n_2^{1-\epsilon}n_3 + n_1n_2n_3^{1-\epsilon})$ for some $\epsilon > 0$.} and queries in time $\tilde{\tilde{o}}(n/\sigma)$ even on unweighted path graphs for queries that ask to report given a fixed source $s \in V$ whether there is a vertex of color $c$ in the same component, unless Online Matrix Multiplication(OMv) has a truly subcubic time algorithm. We then show that even update time $\tilde{\tilde{o}}(\sigma)$ and query time $\tilde{\tilde{o}}(n)$ is not possible if we have directed general graph or ask for distance queries of approximation factor $<5/3$. Combined they strictly improve in generality and query time over a recent lower bound by Gawrychowski et al. \cite{gawrychowski2016nearest} showing that for weighted trees, query and update time $\tilde{\tilde{o}}(\sqrt{n})$ where $\sigma = \Theta(\sqrt{n})$ would imply a truly subcubic solution to tripartite APSP. Our reduction implies an interesting connection to Pagh's problem and the lower bound is in fact obtained by adapting the reduction from Pagh's problem in  \cite{DBLP:journals/corr/HenzingerKNS15}.
\section{Preliminaries}
\label{sec:prelim}
We denote by $\mathbf{dist}(u,v)$ for $u,v \in V$ the shortest-path distance from $u$ to $v$ and by $\mathbf{dist}(u,c)$ with $u \in V, c \in C$ the shortest-path distance between $u$ and the nearest vertex of color $c$. When the context is clear, we often only refer to the nearest colored vertex instead of the nearest vertex of color $c$ and let $c$ denote the color under consideration. We let $v = \textsc{Nearest}(u,c)$ denote the nearest vertex of color $c$ to $u$, i.e. $\mathbf{dist}(u,c) = \mathbf{dist}(u,\textsc{Nearest}(u,c))$. We also make use of the following data structures.

\textbf{Predecessor Search Problem.} In the predecessor search problem, we are given a universe $U=\{0, .., m-1\} = [m]$ and a subset $S \subseteq U$ of size $n = |S|$. Given an element $x \in U$, we ask for the largest element in $S$ that is smaller than $x$ or more formally, we ask for the predecessor of $x$, $\textsc{Pred}(x) = \max\{y \in S| y < x\}$. We let the successor of $x$ be $\textsc{Succ}(x) = \min\{y \in S| y > x\}$. P\v{a}tra\c{s}cu and Thorup present in \cite{DBLP:journals/corr/abs-cs-0603043} a predecessor search data structures that solves queries and updates (insertions and deletions into/from $S$) in $O(\log\log n)$ time and linear space and give a lower bound of $\Omega(\log\log n /\log\log\log n)$ if $m \leq \text{poly}(n)$ and only almost linear-space is given. 

\textbf{Range Minimum Query (RMQ).} A RMQ is a structure augmenting an array $A[1 .. n]$ answering queries of the form $\textsc{Rmq}(i,j) = \min_{k \in [i,j]} A[k]$ by returning the index of field with the minimal value, for any $1 \leq i \leq j \leq n$. RMQ can be solved with $O(n)$ preprocessing time, taking $O(n)$ space and $O(1)$ query time \cite{Bender:2000:LPR:646388.690192}\cite{Fischer2006}.

\textbf{Least Common Ancestor (LCA).} The LCA problem is the problem of finding the least common ancestor in $T$, which we denote $\textsc{Lca}(x,y)$ $T$, of any two nodes $x,y \in \mathcal{V}(T)$. The LCA problem can be reduced to RMQ and can therefore be implemented within the same bounds. 

\textbf{Hash table.} Given a universe $U = [m]$ and a (dynamic) subset $S \in U$, with $n = |S|$, we can query for any $x \in U$ whether $x$ is in $S$. In \cite{dietzfelbinger1994dynamic}, a data structure is presented that can run deterministic queries in constant time and updates of the set $S$, i.e. insertions and deletions, in constant amortized time.

\section{Static Color Distance Oracle}
\label{subsec:fasterquerydo}

 We construct our Color Distance Oracle as in \cite{DBLP:journals/corr/abs-1109-3114} with the classic techniques by Thorup and Zwick \cite{Thorup:2005:ADO:1044731.1044732}: For a given positive integer $k$, we construct the vertex sets $V = A_0 \supseteq A_1 \supseteq A_2 \supseteq \cdots \supseteq A_{k-1}$, where $A_i$ is obtained by sampling each vertex in $A_{i-1}$ with probability $\sigma^{-1/k}$, for $1 \leq i \leq k-1$, and define the set $A_k = \emptyset$. For each vertex $v$ in $V$, we store for each set $A_i$ with $0 \leq i \leq k-1$ the closest neighbour in $A_i$ denoted by $p_i(v)$, where we break ties arbitrarily. For every $v \in V$, we define $\Delta_i(v)= \dist{v, p_{i+1}(v)} - \dist{v, p_{i}}$, for $0 \leq i < k-1$. We then store all such distances in a consecutive array $P_v[0 .. k-2]$ with $P_v[i] = \Delta_i(v)$ for all $i$ and augment $P_v$ by a RMQ-structure that returns the maximum value in a subarray. We denote a query on the RMQ structure over $P_v$ in the range $a$ to $b$ by $\textsc{Rmq}_v(a, b)$ for any $0 \leq a \leq b < k-1$. We define a \textit{bunch} $B(v)$ for every vertex $v$ in $V$ as follows
\[
  B(v) = \bigcup\limits_{i=0}^{k-1} \{u \in A_i  \setminus A_{i+1} | \dist{v,u} < \dist{v, p_{i+1}(v)}\}
\]
and construct for every color $c \in C$ a bunch $B(c)= \bigcup_{v \in V_c} B(v)$. With each $B(c)$, we store in a hash table the vertices $v \in B(c)$ and associate with their key the distance $\dist{v, c}$. This completes our construction. The following lemma bounds space and construction time, but we defer the proof to the appendix since it only differs by bounding the space of the RMQ data structures from the proof in \cite{DBLP:journals/corr/abs-1109-3114}.

\begin{lemma}
 We use at most space $O(kn\sigma^{1/k})$ to represent the Color Distance Oracle and construction time $O(m\sigma)$.
\label{lem:sizebunch}
\end{lemma}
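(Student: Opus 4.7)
My plan is to decompose the space into three contributions. First, the sampled sets $A_i$ stored as membership arrays together with the pointers $p_0(v), \dots, p_{k-1}(v)$ trivially occupy $O(kn)$ space. Second, each array $P_v$ has length $k-1$, and the linear-space constant-query-time RMQ cited in the preliminaries uses $O(\ell)$ space on an array of length $\ell$, giving $O(kn)$ total for the arrays together with their RMQ augmentations. This is the one piece of accounting that distinguishes my analysis from \cite{DBLP:journals/corr/abs-1109-3114}, and both of these contributions are absorbed by the claimed $O(kn\sigma^{1/k})$ bound.

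The main step is to bound the total space used by the color bunches. I would reuse the standard Thorup--Zwick bunch-size bound: when $A_{i+1}$ is sampled from $A_i$ with probability $\sigma^{-1/k}$, the expected size of the vertex bunch $B(v)$ is $O(k\sigma^{1/k})$ for every $v \in V$. The quick argument is that, fixing $v$ and $i$, one sorts $A_i$ by distance from $v$; the elements of $A_i \cap B(v)$ form exactly the prefix strictly preceding the first $A_{i+1}$-element in this order, and this prefix length is a geometric random variable of mean $\sigma^{1/k}$. Summing over the $k$ levels gives $O(k\sigma^{1/k})$ per vertex. Since the color classes $V_c$ partition $V$, the union definition $B(c) = \bigcup_{v \in V_c} B(v)$ then gives
\[
\sum_{c \in C} |B(c)| \;\le\; \sum_{c \in C} \sum_{v \in V_c} |B(v)| \;=\; \sum_{v \in V} |B(v)|,
\]
so the expected total color-bunch space is $O(kn\sigma^{1/k})$. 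Storing each $B(c)$ in a hash table from \cite{dietzfelbinger1994dynamic} adds only a constant multiplicative overhead.

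For the construction time I would follow \cite{DBLP:journals/corr/abs-1109-3114}: the pointers $p_i(v)$ are computed by $k$ multi-source Dijkstra runs from the sets $A_i$, and each color bunch $B(c)$ is filled by one modified Dijkstra from the source set $V_c$, giving $O(m\sigma)$ in total; the RMQ structures and hash tables are then built in time linear in their own sizes, which are dominated by the bunch bound. I do not anticipate a substantive obstacle: the only novelty beyond the proof in \cite{DBLP:journals/corr/abs-1109-3114} is the RMQ accounting in the second bundle, which is immediate from the cited RMQ construction, and the partition identity that converts the per-vertex bunch bound into the per-color bunch bound is a one-line swap of summation.
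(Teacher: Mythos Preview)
Your argument has a genuine gap at the last sampling level. The geometric prefix argument you invoke works for levels $i=0,\dots,k-2$, because there $A_{i+1}$ is a random $\sigma^{-1/k}$-sample of $A_i$ and the prefix length before the first $A_{i+1}$-element is indeed geometric with mean $\sigma^{1/k}$. But for $i=k-1$ the set $A_k$ is empty by construction, so the ``prefix preceding the first $A_k$-element'' is all of $A_{k-1}$, whose expected size is $n\sigma^{-(k-1)/k}$, not $\sigma^{1/k}$. When $\sigma\ll n$ this blows up: e.g.\ for $\sigma=n^{1/2}$ and $k=2$ you get $E[|B(v)|]=\Theta(n^{3/4})$ rather than $O(k\sigma^{1/k})=O(n^{1/4})$. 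Consequently your partition inequality
\[
\sum_{c\in C}|B(c)|\;\le\;\sum_{v\in V}|B(v)|
\]
is too loose: the right-hand side is $\Theta\bigl((k-1)n\sigma^{1/k}+n^2\sigma^{-(k-1)/k}\bigr)$, which exceeds $kn\sigma^{1/k}$ by a factor of roughly $n/\sigma$.

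The paper's proof fixes exactly this by \emph{not} passing through the per-vertex sum for level $k-1$. It splits $B(c)$ into $\bigl(\bigcup_{v\in V_c}B(v)\setminus A_{k-1}\bigr)\cup A_{k-1}$ and exploits that $B(c)$ is a \emph{union}, so the copy of $A_{k-1}$ contained in every $B(v)$ is counted only once per colour, not once per vertex. This yields
\[
\sum_{c\in C}E[|B(c)|]\;\le\;\sum_{v\in V}E[|B(v)\setminus A_{k-1}|]\;+\;\sigma\cdot E[|A_{k-1}|]\;=\;(k-1)n\sigma^{1/k}+\sigma\cdot n\sigma^{-(k-1)/k}\;=\;kn\sigma^{1/k}.
\]
Your RMQ and construction-time accounting are fine; only the bunch bound needs this extra step.
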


Note that the construction cost can be slightly improved for $\sigma > n^{k/(2k-1)}$ using the construction of Hermelin et al. \cite{Hermelin2011} but our query time improvement doesn't carry over to their construction. We give the following query algorithm for color $c \in C$ and vertex $v \in V$:
\begin{lstlisting}[caption={Query(v,c)},label=list:8-6,captionpos=t,abovecaptionskip=-\medskipamount, mathescape=true]
${lower\_bound} \gets 0$
${upper\_bound} \gets k-1$
While $upper\_bound \neq lower\_bound$ 
Do
   ${i} \gets \lceil({lower\_bound} + {upper\_bound})/2 \rceil$
   // Compute index $j$ s. t. $\Delta_j(v) = \max_{a \in \{{lower\_bound}, .., i - 1\}} \Delta_a(v)$
   ${j} \gets \textsc{Rmq}_v({lower\_bound}, i - 1)$
   If $p_j(v) \not\in B(c)$
   Then
      ${lower\_bound} \gets {i}$
   Else 
      ${upper\_bound} \gets {j}$
   End
End
Return $p_{{lower\_bound}}(v)$
\end{lstlisting}

We claim that the query procedure returns a colored vertex $w_{c}$ for $w_c = p_{{lower\_bound}}(v)$ whose distance to $v$ is $\mathbf{dist}(v, c) \leq \mathbf{dist}(v, w_{c}) \leq (4k-5)\mathbf{dist}(v, c)$. For the rest of the section, we let $w_{best} = \textsc{Nearest}(v,c)$.

As in \cite{DBLP:journals/corr/abs-1202-2336}, we let $\mathcal{I}$ be the sequence $0,.., k-1$. We call an index $j \in \mathcal{I}$, \textit{$(v,c)$-terminal} if $p_j(v) \in B(c)$. We say that a subsequence $i_1,.., i_2$ of $\mathcal{I}$ is \textit{$(v,c)$-feasible} if
(1) $\mathbf{dist}(v, p_{i_1}(v)) \leq 2i_1 \mathbf{dist}(v,c)$, and
(2) $i_2$ is \textit{$(v,c)$-terminal}. Using these definitions we are ready to prove our claim in the ensuing two lemmas.

\begin{lemma}
\label{lemma:lowerbound}
  Let $i_1, .., i_2 \subseteq \mathcal{I}$, with $|\mathcal{I}|>1$, be \textit{$(v,c)$-feasible} and let $i = \lceil({i_1} + {i_2})/2 \rceil$. Let $\mathcal{I'}$ be the sequence $i_1, .., i-1$. Let $j$ be the index in $\mathcal{I}'$, that maximizes $\Delta_j(v)$. Then if $j \not\in B(c)$ the subsequence $i, .., i_2$ is \textit{$(v,c)$-feasible}. Otherwise, the subsequence $i_1, .., j$ is \textit{$(v,c)$-feasible}. The obtained subsequence is of size at most $\frac{2}{3}\mathcal{I}$.
\end{lemma}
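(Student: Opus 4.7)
The plan is to isolate one arithmetic fact, adapted from the classic Thorup--Zwick analysis to the colored setting, and then to verify feasibility in both cases by telescoping; the size bound drops out of a short arithmetic check at the end.

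\textbf{Key fact.} If $p_j(v) \notin B(c)$ then $\Delta_j(v) \leq 2\dist{v,c}$. To prove this I would fix $w = \textsc{Nearest}(v,c)$, let $j'$ be the unique index with $p_j(v) \in A_{j'} \setminus A_{j'+1}$, and note that $j' \geq j$ since $p_j(v) \in A_j$. Because $B(w) \subseteq B(c)$, the hypothesis gives $p_j(v) \notin B(w)$, so the definition of a bunch forces $\dist{w, p_j(v)} \geq \dist{w, p_{j'+1}(w)} \geq \dist{w, p_{j+1}(w)}$. Using $\dist{v, p_{j+1}(v)} \leq \dist{v, p_{j+1}(w)}$ (since $p_{j+1}(v)$ is the closest $A_{j+1}$-vertex to $v$) together with two triangle inequalities yields $\dist{v, p_{j+1}(v)} \leq 2\dist{v,c} + \dist{v, p_j(v)}$, which is the claim.

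With this in hand, Case 2 (where $p_j(v) \in B(c)$) is immediate: the new subsequence $i_1, \ldots, j$ inherits condition (1) at its left endpoint from the input and satisfies condition (2) at $j$ by hypothesis. For Case 1, condition (2) is inherited at the unchanged right endpoint $i_2$; for condition (1) at the new left endpoint $i$, the RMQ property says that $j$ maximizes $\Delta$ over $\{i_1, \ldots, i-1\}$, so the Key Fact upgrades to $\Delta_l(v) \leq \Delta_j(v) \leq 2\dist{v,c}$ for \emph{every} $l$ in that range. Telescoping,
\[
  \dist{v, p_i(v)} \;=\; \dist{v, p_{i_1}(v)} + \sum_{l=i_1}^{i-1} \Delta_l(v) \;\leq\; 2i_1\dist{v,c} + 2(i-i_1)\dist{v,c} \;=\; 2i\dist{v,c}.
\]

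The size bound is a direct computation: writing $n = i_2 - i_1 + 1$ and $i = \lceil (i_1 + i_2)/2 \rceil$, the Case 1 length is $i_2 - i + 1 = \lfloor (n+1)/2 \rfloor$ and the Case 2 length is at most $i - i_1 = \lceil (n-1)/2 \rceil$; inspecting small $n$ confirms that both are bounded by $\lceil 2n/3 \rceil$, with the worst ratio $2/3$ attained at $n = 3$. The one step requiring real care is the Key Fact, in particular the $j' \geq j$ bookkeeping that underwrites the inequality $\dist{w, p_{j'+1}(w)} \geq \dist{w, p_{j+1}(w)}$; everything else is telescoping and arithmetic.
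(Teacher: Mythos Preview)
Your proof is correct and follows essentially the same route as the paper: both establish the ``key fact'' $\Delta_j(v)\le 2\dist{v,c}$ via the Thorup--Zwick chain of inequalities, then telescope to verify condition~(1) at the new left endpoint $i$, with the other case and condition~(2) being immediate. Your version is in fact a bit more careful than the paper's on two minor points---you make the $j'\ge j$ layer bookkeeping explicit (the paper tacitly assumes $p_j(v)\in A_j\setminus A_{j+1}$), and your size computation correctly notes that the $\tfrac{2}{3}$ bound is attained (not beaten) at $n=3$, whereas the paper's wording ``smaller $\tfrac{2}{3}|\mathcal{I}|$'' is slightly loose there.
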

\begin{proof}
If $p_j(v) \in B(c)$, then $j$ is \textit{$(v,c)$-terminal}. Hence $i_1, .. j$ is  \textit{$(v,c)$-feasible}. As $j \leq  \lceil({i_1} + {i_2})/2 \rceil - 1 < ({i_1} + {i_2})/2$ and $|\mathcal{I}|>1$ the subsequence is of size at most $\frac{1}{2}\mathcal{I}$.

Now, consider the case where $p_j(v) \not\in B(c)$. Then \[
\mathbf{dist}(w_{best}, p_{j+1}(w_{best})) < \mathbf{dist}(w_{best}, p_{j}(v)).
\] 
We can now employ the analysis from \cite{Thorup:2001:CRS:378580.378581}:
\begin{equation}
\begin{split}
\mathbf{dist}(v, p_{j+1}(v)) &\leq \mathbf{dist}(v, p_{j+1}(w_{best}))
\\&\leq \mathbf{dist}(w_{best},v) + \mathbf{dist}(w_{best}, p_{j+1}(w_{best})) \\&< \mathbf{dist}(w_{best},v) + \mathbf{dist}(w_{best}, p_{j}(v))
\\&\leq 2\mathbf{dist}(w_{best},v) + \mathbf{dist}(v, p_{j}(v))
\end{split}
\end{equation}
Therefore $\Delta_j(v) = \mathbf{dist}(v, p_{j+1}(v)) - \mathbf{dist}(v, p_{j}(v)) \leq  2\mathbf{dist}(w_{best},v)$. Since $i_1,..,i_2$ is \textit{$(v, c)$-feasible}, we have $\mathbf{dist}(v, p_{i_1}(v)) \leq 2i_1 \mathbf{dist}(v, c)$. By choice of $j$, 
\begin{equation}
\begin{split}
\mathbf{dist}(v, p_{i}(v))
&= 2i_1\mathbf{dist}(v, w_{best}) + \sum_{j' \in \mathcal{I}'} \Delta_{j'}(v)\\
&\leq 2i_1\mathbf{dist}(v,w_{best}) + |\mathcal{I}'|\max_{j' \in \mathcal{I}'} \Delta_{j'}(v)\\
&= 2i_1\mathbf{dist}(v, w_{best}) + (i - i_1) \Delta_{j}(v) \\
&= 2i \mathbf{dist}(v, w_{best})
\end{split}
\end{equation}
As $i_2$ is \textit{$(v, c)$-terminal}, we therefore get that $i, .., i_2$ is \textit{$(v, c)$-feasible}. It is now easy to see that by choice of $i$ and as $|\mathcal{I}| > 1$ the derived sequence is smaller $\frac{2}{3}|\mathcal{I}|$.
\end{proof}

\begin{lemma}
The algorithm given in procedure $\textsc{Query}(v,c)$ reports a distance estimate with stretch at most $(4k-5)$ in time $O(\log{k})$.
\end{lemma}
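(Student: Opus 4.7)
The plan is to combine Lemma \ref{lemma:lowerbound} as a loop invariant with the structure of $(v,c)$-feasibility to bound both the stretch and the query time.

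First I would check that the initial window $[0,k-1]$ is $(v,c)$-feasible: condition (1) reduces to $\mathbf{dist}(v,p_0(v)) = 0$ since $p_0(v) = v$, and condition (2), that $k-1$ is $(v,c)$-terminal, holds because $A_k = \emptyset$ makes the bunch-defining inequality $\mathbf{dist}(w_{best},p_{k-1}(v)) < \mathbf{dist}(w_{best},p_k(w_{best})) = \infty$ vacuous, placing $p_{k-1}(v)$ in $B(w_{best}) \subseteq B(c)$. I would then observe that the code's branch on whether $p_j(v) \in B(c)$ --- decided in $O(1)$ by a single hash-table lookup --- matches the two cases of Lemma \ref{lemma:lowerbound} exactly, so every iteration keeps the window $(v,c)$-feasible and shrinks it by a factor of at least $2/3$. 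Hence the loop terminates in $\log_{3/2} k = O(\log k)$ iterations, and since each iteration performs only one $O(1)$ $\textsc{Rmq}_v$ call and one $O(1)$ hash lookup, the total query time is $O(\log k)$.

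For the stretch, at termination both feasibility conditions apply at the single surviving index $i^\ast$, so $\mathbf{dist}(v,p_{i^\ast}(v)) \le 2 i^\ast\, \mathbf{dist}(v,c)$ and $p_{i^\ast}(v) \in B(c)$. The hash lookup then returns the exact stored $\mathbf{dist}(p_{i^\ast}(v),c)$, which by the triangle inequality through $w_{best}$ is at most $\mathbf{dist}(v,p_{i^\ast}(v)) + \mathbf{dist}(v,c) \le (2 i^\ast + 1)\, \mathbf{dist}(v,c)$. Summing the two summands of $\widehat{\mathbf{dist}}(v,c) = \mathbf{dist}(v,p_{i^\ast}(v)) + \mathbf{dist}(p_{i^\ast}(v),c)$ therefore yields a stretch of at most $4 i^\ast + 1 \le 4(k-1) + 1$.

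The main obstacle is tightening this last constant from $4k-3$ down to the advertised $4k-5$. Note that for all $i^\ast \le k-2$ the crude bound $4 i^\ast + 1 \le 4k - 7$ already suffices, so the gap is concentrated entirely in the boundary case $i^\ast = k-1$. The standard remedy, mirroring Chechik's original analysis, is either to sharpen feasibility condition (1) inside Lemma \ref{lemma:lowerbound} from $2 i_1\, \mathbf{dist}(v,c)$ to $(2 i_1 - 1)\, \mathbf{dist}(v,c)$ for $i_1 \ge 1$ (re-examining the inductive step to verify that the sharper bound still propagates through the $|\mathcal{I}'|\,\Delta_j(v)$ term), or to run a dedicated boundary argument at $i^\ast = k - 1$ that exploits the trivial witness of $p_{k-1}(v) \in B(w_{best})$ to save one $\mathbf{dist}(v,c)$ summand in the triangle-inequality step. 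Either refinement leaves the $O(\log k)$ query-time bound untouched and brings the stretch down to the stated $4k - 5$.
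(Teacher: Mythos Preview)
Your proposal is correct and follows essentially the same approach as the paper: invoke Lemma~\ref{lemma:lowerbound} as a loop invariant to get the $O(\log k)$ bound, derive the $4k-3$ stretch from feasibility at the terminal index, and then tighten to $4k-5$ by sharpening condition~(1) to $(2i_1-1)\,\mathbf{dist}(v,c)$, which is exactly what the paper does (citing the technique of \cite{Thorup:2001:CRS:378580.378581}). Your explicit verification that the initial window $[0,k-1]$ is $(v,c)$-feasible is a detail the paper leaves implicit, and your alternative ``boundary case $i^\ast=k-1$'' suggestion is extra; the paper only takes the first route.
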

\begin{proof}
  By lemma \ref{lemma:lowerbound} the number of potential indices reduces by factor $\frac{2}{3}$ by every iteration of the loop. Therefore, we have at most $log_{\frac{3}{2}}{k}$ iterations. As querying the RMQ data structure takes constant time the overall running time is $O(\log{k})$.

 To show stretch of at most $4k-5$, we observe that the final sequence has only a single index $j \leq k-1$, and as the sequence is still \textit{$(v, c)$-feasible}, we get by property (1) that $\mathbf{dist}(v, p_j(v)) \leq 2j\mathbf{dist}(v, w_{best}) \leq 2(k-1)\mathbf{dist}(v, w_{best})$. By property (2), we get that $p_j(v)\in B(c)$ and we can bound 
\begin{equation}
\begin{split}
\mathbf{dist}(p_j(v), c) &= \mathbf{dist}(p_j(v), w_{best}) 
\leq \mathbf{dist}(v, w_{best}) + \mathbf{dist}(v, p_j(v)) \\ 
&\leq \mathbf{dist}(v,w_{best}) + 2(k-1)\mathbf{dist}(v, w_{best})
\end{split}
\end{equation}
giving an overall distance of $\mathbf{dist}(v, p_j(v)) + \mathbf{dist}(p_j(v), c) \leq (4k-3)\mathbf{dist}(v, w_{best})$. This can be slightly improved to stretch $4k-5$ by using the technique from \cite{Thorup:2001:CRS:378580.378581}(Lemma A.2) even without changing the overall approach. We only have to adapt property (1) in the definition of \textit{$(v, c)$-feasible} sequences $i_1, .., i_2$ to $\mathbf{dist}(v, p_{i_1}(v)) \leq (2i_1-1) \mathbf{dist}(v, c)$, and adapt lemma \ref{lemma:lowerbound} to directly get the improvement. 
\end{proof}
We point out that the technique presented to query the Color Distance Oracle extends to Compact Routing Schemes as described by Thorup and Zwick \cite{Thorup:2001:CRS:378580.378581} such that paths can be computed in $O(\log{k})$ time. By using a recently devised succinct RMQ structure that is only allowed to query intervals where both indices are multiples of $\log{k}$ by Tsur \cite{tsur2016succinct}, we can adapt our approach to reduce the intervals as described in lemma \ref{lemma:lowerbound} down to size $O(\log{k})$ and then test each remaining index in $O(\log{k})$ overall time. The data structure only requires $O\left(\frac{k \log\log{k}}{\log{k}}\right) = o(k)$ additional bits compared to $O(k\log n)$ bits required for our preceding structure. This is an important improvement for routing schemes as the RMQ structure needs to be appended to each label in order to achieve the query time improvement.

\section{Supporting color-reassignments}
\label{sec:colorreassign}
In this section, we let $(V, \mathbf{dist})$ denote a n-point metric space and let the metric be denoted by $\rho$. We say that $\rho$ is an ultrametric for $V$ if it is a metric that ensures the strong triangle inequality $\rho(x, z) \leq \max\{\rho(x,y),\rho(y,z)\}$ for all $x,y,z \in V$. It is well-known, that a finite ultrametric can be represented by a rooted hierarchically well-separated tree (HST)\footnote{A good introduction to HSTs and metric representations can be found in \cite{Bartal:2003:MRP:780542.780610}.} $T=(V_T,E_T)$ with a value assigned to each vertex in $V_T$ by the function  $\Delta : V_T \rightarrow (0, \infty)$ whose leaf set is $V$ and where $\Delta(v) < \Delta(\textsc{Parent}(v))$ for any $v \in V_T$. Then given $x,y \in V$, $\rho(x,y) = \Delta(\textsc{lca}_T(x,y))$. Thus, working with ultrametrics is very convenient as they allow to reduce problems to problems on trees which are normally well-understood.

In 2005, Mendel and Naor \cite{DBLP:journals/corr/abs-cs-0511084} showed a Las Vegas algorithm that given a metric space $(V,\mathbf{dist})$ finds a ultrametric $\rho$ such that for a subset $U \subseteq V$ of size $|U| \geq |V|^{1-1/k}$ the distortion of the ultrametric would be low, i.e. that for each $u \in U, v \in V, \mathbf{dist}(u,v) \leq \rho(u,v) \leq 128k \mathbf{dist}(u,v)$. They then showed that given this algorithm, a collection of ultrametrics $\mathcal{R} = \{ \phi_1, \phi_2, .., \phi_s\}$ with $E[s]= O(kn^{1/k})$ can be found together with a function $\mathbf{home} : V \rightarrow [1,s]$ such that for each $u,v \in V$ with $i = \mathbf{home}(v)$
\[
\mathbf{dist}(u,v) \leq \rho_{i}(u,v) \leq 128k\mathbf{dist}(u,v)
\]
Recently, Abraham et al. \cite{DBLP:journals/corr/AbrahamCEFN17} gave a deterministic algorithm that improves this result to stretch $8(1+\epsilon)k$ by increasing the size of the collection $\mathcal{R}$ by factor $O(\epsilon^{-1})$. Thus, for fixed $\epsilon$ the space only differs by a constant factor.

For our Color Distance Oracle, we find a collection of ultrametrics $\mathcal{R}$ and build for each ultrametric $\rho_i \in \mathcal{R}$ an HST $T_i$. Additionally, we store with each $v \in V$ a pointer to $T_{\mathbf{home}(v)}$. We observe that taking 
\[ 
\text{min}_{u \in V_c} \Delta(\textsc{Lca}_{T_{\mathbf{home}(v)}}(u,v))
\]
always gives a $8(1+\epsilon)k$ approximation on $\mathbf{dist}(v, c)$ as we take the smallest distance estimate among all estimates to colored vertices and each of them is at least $8(1+\epsilon)k$ approximate in the represented metric $\rho_{\mathbf{home}(v)}$. Let $a$ be the least common ancestor of $v$ and the nearest colored node in $T_{\mathbf{home}(v)}$, i.e. $a = \textsc{Lca}_{T_{\mathbf{home}(v)}}(v, \textsc{Nearest}(v,c))$. Then, there cannot be any vertex $a'$ on the path from $v$ to $a$ in $T_{\mathbf{home}(v)}$ with a colored vertex in its subtree as otherwise $a'$ would be the least common ancestor of $v$ and the colored node and by the property of HSTs that $\Delta(x) < \Delta(\textsc{Parent}(x))$ for all $x \in V$, we would thus derive a contradiction on the minimality of $\Delta(a)$. We conclude that to derive a distance estimate on $\mathbf{dist}(v,c)$, we only need to find the nearest ancestor of $v$ that has a colored node in its subtree. We therefore construct the data structure described in the following lemma \ref{lmm:treecontainlambda} over each HST, whose proof is deferred to the appendix.

\begin{lemma}
\label{lmm:treecontainlambda}
We can maintain a data structure over a tree $T=(V,E)$ with function $c : V \rightarrow C$ as defined before, that given $v \in V, c \in C$ finds the nearest ancestor of $v$ in $T$ that has a $c$-colored vertex in its subtree and that is able to process color-reassignments. Both operations take $O(\log\log{n})$ worst-case time and the data structure requires $O(n)$ space.
\end{lemma}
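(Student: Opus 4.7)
The plan is to linearise $T$ via a depth-first traversal and reduce the problem to predecessor search combined with constant-time $\textsc{Lca}$ queries. First I would run a DFS on $T$ to assign each vertex $v$ an in-time $\mathrm{in}(v) \in [1,n]$ and an out-time $\mathrm{out}(v)$ such that the subtree of $v$ is exactly the set of vertices whose in-time lies in $[\mathrm{in}(v), \mathrm{out}(v)]$, and build the linear-space, $O(1)$-query $\textsc{Lca}$ structure obtained via the RMQ reduction recalled in Section~\ref{sec:prelim}. Then for every color $c$ that is currently used I would maintain a P\v{a}tra\c{s}cu--Thorup predecessor structure $S_c$ over the universe $[n]$ storing $\{\mathrm{in}(u) : c(u) = c\}$, together with a deterministic hash table mapping each such $c$ to a pointer to $S_c$. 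Since the total number of in-times stored across all $S_c$ is exactly $n$, the whole data structure occupies $O(n)$ space and every predecessor, successor, insert and delete costs $O(\log\log n)$ worst-case time.

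The key combinatorial claim driving the query is: among all $c$-colored vertices $u$, the depth of $\textsc{Lca}(v,u)$ is maximised by the in-time predecessor or the in-time successor of $\mathrm{in}(v)$ inside $S_c$. Viewed through the Euler tour---where $\textsc{Lca}$ of two vertices is the minimum-depth vertex in the Euler-tour range between their occurrences---this follows from the monotonicity that shrinking such a range can only raise its depth minimum. Moreover, the maximising $\textsc{Lca}$ is exactly the nearest ancestor of $v$ whose subtree contains a $c$-colored vertex: any strict descendant of this $\textsc{Lca}$ on the $v$--root path has a subtree disjoint from the witness, and any strict ancestor still contains the witness and is therefore not nearest.

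With this claim in hand, a query $(v,c)$ fetches $S_c$ through the hash table, asks for the predecessor $p$ and successor $s$ of $\mathrm{in}(v)$ in $S_c$, recovers the corresponding vertices via a precomputed array indexed by in-time, computes their two $\textsc{Lca}$s with $v$, and returns whichever lies deeper in $T$. A color reassignment of $v$ from $c$ to $c'$ deletes $\mathrm{in}(v)$ from $S_c$ (deallocating $S_c$ if it becomes empty) and inserts it into $S_{c'}$ (creating $S_{c'}$ via the hash table if necessary). Each step runs in $O(\log\log n)$ worst-case time and the overall space stays $O(n)$. The main obstacle I expect is the monotonicity lemma relating DFS-order proximity to $\textsc{Lca}$ depth; once that is established the rest is a careful but routine assembly of the linear-space, $O(\log\log n)$-time building blocks already listed in Section~\ref{sec:prelim}, modulo boundary cases such as $v$ itself being $c$-colored or $S_c$ being empty at query time.
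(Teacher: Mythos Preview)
Your proposal is correct and follows the same approach as the paper's appendix proof: maintain one predecessor structure per color over a linear vertex ordering, and answer a query by returning the deeper of $\textsc{Lca}(v,p)$ and $\textsc{Lca}(v,s)$ for the predecessor $p$ and successor $s$ of $v$'s label among the $c$-colored vertices. The one substantive difference is that you (correctly) linearise via DFS in-times, whereas the paper writes ``BFS''; the argument in fact needs subtrees to occupy contiguous label ranges---equivalently, the Euler-tour monotonicity you invoke---which holds for DFS preorder but fails for BFS numbering, so your formulation is the right one.
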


As every HST requires at most $O(n)$ space and $s = O(\epsilon^{-1}kn^{1/k})$, our data structure requires space $O(\epsilon^{-1}kn^{1+1/k})$. To achieve the data structure with fast query time, we answer a query for $\widehat{\mathbf{dist}}(v,c)$ by querying the data structure described in lemma \ref{lmm:treecontainlambda} on tree $T_{\mathbf{home}(v)}$ returning the nearest colored ancestor $a$ and we return $\Delta(a)$. Thus only a single invocation of the tree structure is required which can be implemented in time $O(\log\log{n})$. For updates, we iterate over all $s$ tree data structures and invoke the color-reassignment on the same parameters. This takes $O(\log\log{n})$ time per tree and as we have $s$ trees, the running time is bound by $O(\epsilon^{-1}\log\log{n}kn^{1/k})$.

For fast update times, we can process a color-reassignment $v \in V, c \in C$ by changing the color in $T_{\mathbf{home}(v)}$ only leaving $v$ without a color in all other trees. If we run our query for $v \in V, c \in C$ on every of the $s$ HSTs, we also know that our distance estimate is a $8(1+\epsilon)k$ approximation as we check every ultrametric and we are sure that for each vertex $u \in V, c(u)=c$ we once queried the HST $T_{\mathbf{home}(u)}$ where $u$ is colored and $\Delta(\textsc{Lca}(v,u))$ is a $8(1+\epsilon)k$ approximation. By standard techniques \cite{DBLP:journals/corr/abs-cs-0511084}, we can also reduce the space consumption to $O(\epsilon^{-1}n^{1+1/k})$ for the data structure with fast updates. The data structure from lemma \ref{lmm:treecontainlambda} can easily be adapted to give a $c$-colored witness $\hat{u}$ such that $\mathbf{dist}(v,\hat{u}) \leq 8(1+\epsilon)k \mathbf{dist}(v,c)$. Finally, if we are only interested in queries, we can for each vertex $v$ of color $c$ in a tree color all its ancestors with $c$ where we allow multiple colors. As HSTs can be balanced, we get an additional factor of $O(\log{n})$ in our data structures. We can then for each color use a succinct nearest marked ancestor structure as presented by Tsur \cite{tsur2016succinct} with $O(\log\frac{\log{\sigma}}{\log{w}})$ query time. 

\section{Lower bound for static Color Distance Oracles}

In this section we prove the following lower bound.
\begin{theorem}
\label{thm:mainresultlbdistancesumry}
Consider an unweighted path $G=(V,E)$ with coloring $c : V \rightarrow C$ and $\sigma \leq O(n^{1-\epsilon})$ for any $\epsilon > 0$. Then, any data structure, using space $S$ on a machine with word size $w = \Theta(\lg{n})$, reporting nearest colored node distance estimates of approximation $k = O(\text{polylog}(n))$ has query time
\[
\Omega\left(\log\frac{\log \sigma}{\log\frac{S}{n}+\log\log n}\right)
\]
The theorem applies to deterministic and randomized queries, admitting a constant error probability.
\end{theorem}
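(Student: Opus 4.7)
The plan is to establish this cell-probe lower bound by reducing predecessor search to the approximate nearest colored node problem on an unweighted path, and then invoking the tight cell-probe lower bound of P\v{a}tra\c{s}cu and Thorup \cite{DBLP:journals/corr/abs-cs-0603043} for predecessor. Their bound asserts that predecessor search on universe $[U]$ with $n$ keys, using $S$ cells of $w$ bits, requires query time $\Omega(\log\tfrac{\log U}{\log(S/n)+\log w})$. Specializing to $U=\sigma$ and $w=\Theta(\log n)$ yields exactly the shape of the claimed bound. Since their argument is already cell-probe and carries over to randomized protocols with constant error, a single reduction will deliver both parts of the theorem.

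\textbf{Reduction.} Given a predecessor instance $T\subseteq[\sigma]$ with $\Theta(n)$ keys (the regime $\sigma = O(n^{1-\epsilon})$ guarantees that $n$ is comfortably larger than $\sigma$), I would encode $T$ on an unweighted colored path of $\Theta(n)$ vertices using $\sigma$ colors. The natural encoding places, for each key $x\in T$, a dedicated colored vertex whose color identifies $x$ and whose position on the path encodes structural information such as the rank of $x$ within $T$; unused positions receive a dummy color. A predecessor query on $q\in[\sigma]$ is then simulated by a constant number of nearest colored node queries at carefully chosen sources whose answers together pinpoint the predecessor of $q$ in $T$. Any space-$S$ time-$t$ color distance oracle on the resulting path yields a predecessor data structure of space $O(S)$ and query time $O(t)$, and plugging into the P\v{a}tra\c{s}cu--Thorup bound delivers the theorem.

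\textbf{Handling the stretch.} The color distance oracle only returns approximations in $[d,(4k-5)d]$ with $k=O(\text{polylog}\,n)$, and two distinct candidate predecessors whose true distances to a source lie within a factor $\sim k$ of each other would be indistinguishable from the oracle's answer. The remedy is to lay out the colored representatives so that, for every query made by the reduction, at most one candidate falls inside the multiplicative uncertainty window $[d,kd]$. A hierarchical, geometrically-spaced layout with base $\gg k$ is a standard way to achieve this; the hypothesis $\sigma=O(n^{1-\epsilon})$ combined with $k=O(\text{polylog}\,n)$ leaves room for such a layout inside an $n$-vertex path while keeping the number of colors at $\sigma$.

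\textbf{Main obstacle.} The technical core is the encoding. One must simultaneously (i) preserve predecessor-identifying information through the polylogarithmic stretch, (ii) use only $\Theta(n)$ vertices and $\sigma$ colors so that the resulting colored path matches the parameters of the theorem, and (iii) simulate each predecessor query by $O(1)$ nearest colored node queries so the reduction is query-time tight. Balancing the geometric separation factor (to defeat the stretch) against the vertex and color budgets is the main challenge; once this encoding is in place, translating the P\v{a}tra\c{s}cu--Thorup lower bound, including the randomized version with constant error, is immediate and yields $\Omega\bigl(\log\tfrac{\log \sigma}{\log(S/n)+\log\log n}\bigr)$.
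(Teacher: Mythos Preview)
Your high-level plan---reduce predecessor search to the problem and invoke P\v{a}tra\c{s}cu--Thorup---is the same as the paper's, but the reduction you sketch does not work as stated, and the paper fills the gap with a genuinely different idea.

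First, the parameters are inconsistent: you posit ``$T\subseteq[\sigma]$ with $\Theta(n)$ keys'', yet $\sigma=O(n^{1-\epsilon})$ forbids $\Theta(n)$ distinct keys in $[\sigma]$. This is a symptom of the deeper problem: the paper does \emph{not} encode a predecessor instance directly. It goes through the known chain colored predecessor $\to$ rank queries (Belazzougui--Navarro) $\to$ \emph{exact} nearest colored node (Gawrychowski et al.), where the path has $n$ vertices, one per sequence position, and $\sigma$ is the alphabet size; this is what makes the $n$ in $S/n$ come out right. Second, your encoding ``color $=$ key value'' cannot answer predecessor with $O(1)$ nearest-colored queries: a query for color $c$ only locates vertices of that \emph{specific} color, so you would need to already know which key is the predecessor before you could ask for it. Geometric spacing of the key positions does not help here, because the obstacle is not distinguishing two candidates---it is that you have no way to name the unknown target color.

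The paper's key step, which your proposal is missing, is that it never asks the approximate oracle to solve predecessor by itself. Instead it shows that an approximate oracle \emph{plus $\tilde O(n)$ extra precomputed space} simulates an \emph{exact} nearest-colored oracle with $O(1)$ overhead, and then invokes the exact lower bound. Concretely: assign edge $(v_x,v_{x+1})$ weight $k^{\ell}$ where $k^{\ell}\mid x$ is maximal, so that a level-$\ell$ interval has total weight $<k^{\ell+1}$; then an estimate $\widehat d\in[d,kd]$ pins down $\ell=\lfloor\log_k \widehat d\rfloor$ up to $\pm 1$. Separately, at every $v_j$ with $k^{\ell}\mid j$ store a hash map giving the exact nearest $c$-colored vertex within $[j,\iota(j,\ell+1)]$; these tables sum to $O(kn\log_k n)=\tilde O(n)$ entries. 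Given $\widehat d$, three hash-map lookups at $v_{\iota(i,\ell-1)},v_{\iota(i,\ell)},v_{\iota(i,\ell+1)}$ recover the exact nearest colored node. Finally, subdividing edges turns the weighted path into an unweighted one of length $\tilde O(n)$. So the ``geometric separation'' lives in the edge weights and is paired with auxiliary tables, not in a spacing of key positions; that coupling is the piece you need to supply.
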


Our proof extends recent results by Gawrychowski et al. \cite{gawrychowski2016nearest} who proved a similar statement for the exact version of the problem. Before we prove our result, we review their lower bound and then show how to extend it. 

The lower bound for the exact version of the problem is based on a reduction from the colored predecessor problem which was used to establish hardness of the predecessor problem \cite{DBLP:journals/corr/abs-cs-0603043,patracscu2007randomization} in the cell-probe model. Belazzougui and Navarro showed in \cite{Belazzougui:2015:OLU:2756876.2629339} that colored predecessor search can be reduced to the rank query problem using partial sum data structures and clever mapping. In the rank query problem, a sequence $S[1,n]$ and an alphabet $[1,\sigma]$ is given with each $S[i] \in [1, \sigma]$ and after the preprocessing, given an index $i \in [1,n]$ and a symbol $c \in [1, \sigma]$ the data structure has to report the number of occurrences of $c$ in the subsequence $S[1,i)$. Finally, Gawrychowski et al. \cite{gawrychowski2016nearest} observed that the rank query problem can be reduced to the nearest colored node problem as follows: 
Given the sequence $S$, we build a path $P=(v_1, .., v_n)$ where $c(v_i)=S[i]$ for every $i \in [1,n]$. With each vertex $v \in P$, we store its rank. Consider a rank query of form $i \in [1,n], c \in [1, \sigma]$. Using a Color Distance Oracle, we query for the nearest colored vertex $v_j$ and if $j < i$, we return the rank stored at $v_j$. Otherwise, $j \geq i$ and we can return the rank stored with $v_j$ decreased by 1. Let us now take the proof for theorem \ref{thm:mainresultlbdistancesumry}.
\begin{proof}
We first consider the (approximate) nearest colored node problem on a weighted path and extend our proof later to the unweighted setting. We assume w.l.o.g.  that the approximation factor is $k \geq \log{n}$ and that $n$ is an exact power of $k$. We use the simple construction of the path $P=(v_1, .., v_n)$ from the sequence $S$ as before. We assume that the nearest colored vertex $v_j = \textsc{Nearest}(v_i, c)$ to $v_i$ has $j \geq i$. This is sufficient as we can reverse the path and run a second query giving the nearest colored vertex $v_{j'}$ with $j' \leq i$ and then take the closer one (in fact one procedure might not find a colored node but we always ensure that the nearest colored node is found in one of those two queries). We let $P[v_i, v_j]$ denote the subpath of $P$ between the vertices $v_i$ and $v_j$ and sometimes refer to it as interval $[i,j]$ of the path.

The intuition behind our proof is that we can choose the edge weights so that we can find the nearest colored node $v_j = \textsc{Nearest}(v_i, c)$ to $v_i$  even if we are only given a distance estimate $\widehat{\mathbf{dist}}(v_i, c)$ of approximation factor $k$. Therefore, we partition the path into intervals of size $k^l$ for every $0 \leq l \leq \log_k{n}$ and refer to them as level-$l$ intervals. We assign edge weights to delimit intervals. By construction, if an edge delimits two level-$l$ intervals then it also delimits two level-$l'$ intervals for all $l' \leq l$. Let an edge $(v_x, v_{x+1})$ be assigned weight $k^l$ if it delimits two level-$l$ intervals but doesn't delimit two level-$l+1$ intervals. More formally, an edge $(v_x, v_{x+1})$ is assigned the weight $k^l$ for the largest $k^l$ such that $k^l \mid x$. This is depicted in figure \ref{fig:path_construction}. Let us now observe that the heavy edges on the path dominate the path weight. A level-$l$ interval contains $k^{l-l'}$ level-$l'$ intervals for $l' < l$ and therefore there are $k^{l-l'} - k^{l-l'-1}$ edges of weight $k^{l'}$. Thus the path from the first node to the last node in the level-$l$ interval has weight
\[
\sum_{l' = 0}^{l-1} (k^{l-l'} - k^{l-l'-1}) k^{l'} < \sum_{l'=0}^{l-1} k^{l-l'} k^{'} = \sum_{l'=0}^{l-1} k^l = lk^l \leq \log_k{n}k^l \leq k^{l+1} 
\]
It follows that if we have $l = \lfloor \log_k(\mathbf{dist}(v_i,v_j)) \rfloor$ for $i \neq j$, then the path $P[v_i, v_j]$ contains an edge delimiting two level-$(l-1)$, i.e. of weight at least $k^{l-1}$, as otherwise the path costs would be strictly less than $k^l \leq \mathbf{dist}(v_i,v_j)$.

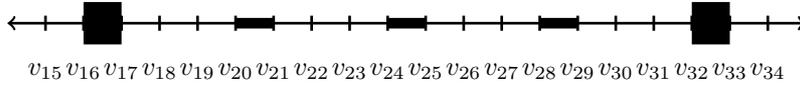
\begin{figure}
	\centering
        \begin{tikzpicture}[scale=0.5] 
        \draw[line width=1pt,<->](-2,0)--(19,0);
        \draw[line width=16pt](0,0)--(1,0); 
        \draw[line width=4pt](4,0)--(5,0); 
        \draw[line width=4pt](8,0)--(9,0); 
        \draw[line width=4pt](12,0)--(13,0); 
        \draw[line width=16pt](16,0)--(17,0); 
        
        \foreach \i in {15,...,34}{
            \draw[line width=1pt] (\i-16,-0.2)--(\i-16,0.2)node[pos=0,below=0.3cm]{$v_{\i}$};
        }
        \end{tikzpicture} 
	\caption[]{The weight of the edges on our constructed path $P$ where we chose $k = 4$. We illustrate heavy edge weight by increased boldness of the edge. For example the edge between vertices $v_{16}$ and $v_{17}$ has weight $k^2$ because $16 \mid k^2$ but $16 \nmid k^3$.}
  \label{fig:path_construction}
\end{figure}

Our second idea is that given two vertices $v_i, v_j$ with $i < j$ and path interval $[i,j]$ containing no node of color $c$, then the nearest colored vertex to $v_j$ is also the nearest colored vertex to $v_i$, i.e. $\textsc{Nearest}(v_i, c) = \textsc{Nearest}(v_j,c)$. We want to provide some special vertices that can return the nearest colored vertex but we need to do so carefully in order to retain near-linear space as our lower bound otherwise becomes meaningless. We therefore only cover intervals starting at special points. To simplify the presentation, we let $\iota(x, l)$ be the function that for any integer $x$ gives the next larger integer divisible by $k^l$. We store with each vertex $v_j$ where $j | k^l$ a data structure that can return $\textsc{Nearest}(v_j, c)$ for all colors $c$ that are on the path interval $[j, \iota(j,l+1)]$. We then say that $v_j$ covers $[j, \iota(j, l+1)]$ and observe that if $j$ is divisible by $k^l$, we cover all level-$l$ intervals starting at $v_j$ up to the end of the current level-$(l+1)$ interval. This is also depicted in figure \ref{fig:drawing_interval_cover}. In order to have fast look-ups, we store with each such vertex $v_j$ a hash map with an entry for each color $c \in C$ that occurs on the path and the corresponding vertex that is closest to $v_j$.

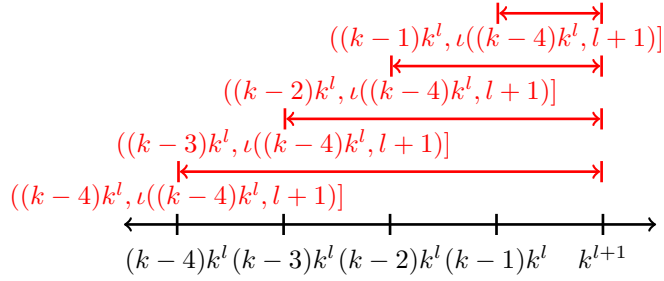
\begin{figure}
	\centering
        \begin{tikzpicture}[scale=0.7] 
          \draw[line width=1pt,<->](0,0)--(10,0); 
          \draw[line width=1pt] (1,-0.2)-- (1,0.2)node[pos=0,below]{$(k-4)k^l$};
          \draw[line width=1pt] (3,-0.2)-- (3,0.2)node[pos=0,below]{$(k-3)k^l$};
          \draw[line width=1pt] (5,-0.2)-- (5,0.2)node[pos=0,below]{$(k-2)k^l$};
          \draw[line width=1pt] (7,-0.2)-- (7,0.2)node[pos=0,below]{$(k-1)k^l$};
          \draw[line width=1pt] (9,-0.2)-- (9,0.2)node[pos=0,below]{$k^{l+1}$};

          \draw[line width=1pt, color=red, |<->|] (1,1)-- (9,1)node[pos=0,below]{$((k-4)k^l, \iota((k-4)k^l,l+1)]$};
          \draw[line width=1pt, color=red, |<->|] (3,2)-- (9,2)node[pos=0,below]{$((k-3)k^l, \iota((k-4)k^l,l+1)]$};
          \draw[line width=1pt, color=red, |<->|] (5,3)-- (9,3)node[pos=0,below]{$((k-2)k^l, \iota((k-4)k^l,l+1)]$};
          \draw[line width=1pt, color=red, |<->|] (7,4)-- (9,4)node[pos=0,below]{$((k-1)k^l, \iota((k-4)k^l,l+1)]$};
        \end{tikzpicture} 
	\caption[Interval coverage by using function $k$.]{The drawing shows a subinterval of $[1,n]$. We see that for every number $x$ that is divisible by $k^l$, we take the interval from $x$ to the closest number that is divisible by $k^{l+1}$ which is illustrated by the red interval. }
  \label{fig:drawing_interval_cover}
\end{figure}

It is straight-forward to see that given a vertex $v_j$, we can use the function $\iota$ to find $\log_k{n}$ special vertices $v_{\iota(j, l)}$ for $0 \leq l \leq \log_k{n}$ such that the union of all \textit{covered} intervals by those special vertices is $\bigcup_{0 \leq l \leq \log_k{n}} [\iota(j, l), \iota(j, l+1)] = [j, n]$ because $\iota(j, l)$ is divisible by $k^l$ by definition hence the hash map at vertex $v_{\iota(j,l)}$ covers the interval $[\iota(j,l), \iota(j, l+1)]$. Thus, we could already query the hash maps at these special vertices to extract the nearest colored node even without any distance estimate but it would incur $\log_k{n}$ look-ups.

Let us now combine both ideas to achieve that only a constant number of the associated hash maps at those special vertices need to be queried. Given a distance estimate $\widehat{\mathbf{dist}}(v_i,c)$ with $l = \lfloor\log_k{\widehat{\mathbf{dist}}(v_i,c)}\rfloor$, $v_j = \textsc{Nearest}(v_i, c)$. By our approximation guarantee 
\[\mathbf{dist}(v_i,v_j) \leq \widehat{\mathbf{dist}}(v_i,c) \leq k\mathbf{dist}(v_i,v_j)
\]
we get that $k^{l-1} \leq \mathbf{dist}(v_i,v_j) \leq k^{l+1}$. We conclude that it suffices to check the hash maps at the special vertices $v_{\iota(i, l')}$ from $l' \in \{l-1, l, l+1\}$ because $j \in [\iota(i, l-1), \iota(i, l+2)]$. Consider that this would not be the case, we know that the path from the interval $[i, \iota(i,l-1)]$ has path weight strictly less than $k^{l-1}$. If $j$ would be in interval $(\iota(i, l+2),n]$ then as the edge $(v_{\iota(i,l+2)}, v_{\iota(i,l+2)+1})$ has weight $k^{l+2}$ and every path to from $v_i$ to a vertex with index in that interval has to include this edge. In both cases, we derive a contradiction. 

It remains to prove that the space taken by the hash maps is near-linear. We therefore observe that the number of entries in each hash map is bounded by the size of the interval is has to cover as every node in the path interval has only one color. It is easy to see that we have $k^{\log_k{n}-j}$ vertices with indices divisible by $k^j$ and each covers an interval of size $k^{j+1}$.

Thus the number of total entries in all hash maps can be bounded by
\[
    \sum_{l = 0}^{\log_k{n}} k^{\log_k{n}-j} k^{j+1} = \sum_{l = 0}^{\log_k{n}} k^{\log_k{n}+1} = kn \log_k{n}
\]
As hash maps take space linear in the number of entries and $k = O(\textit{polylog}(n))$, we can bound the space by $\tilde{O}(n)$ incurring only a $\log\log{n}$ term in the lower bound as required. Thus, if the space is not dominated by the data structure for distance estimates, we still ensure the stated bounds. 

Finally, we observe that the path $P[v_1, v_n]$ has total weight at most 
\[
\log_k{n} k^{\log_k{n}} =n\log_k{n}
\]
as it is a level-$\log_k{n}$ interval. We can thus replace edges of weight $x$ with a path of $x-1$ dummy vertices and unit weight edges. 
\end{proof}

\section{Lower bounds for the dynamic setting}

During the last years, several techniques were presented to prove conditional lower bounds for dynamic problems by reducing to problems that are conjectured to be hard \cite{patrascu10mp-3sum,williams2010subcubic,DBLP:journals/corr/AbboudW14,DBLP:journals/corr/HenzingerKNS15}. We use the framework given by Henzinger et al. \cite{DBLP:journals/corr/HenzingerKNS15} who reduce their problems from a contrived version of Online-Vector-Multiplication defined as follows.
\begin{definition}
 [$\gamma$-OuMv problem (c.f. Definition 2.6 \cite{DBLP:journals/corr/HenzingerKNS15})] Let $\gamma > 0$ be a fixed constant. An algorithm for the $\gamma$-OuMv problem is given parameters $n_1,n_2, n_3$ as its input with the promise that $n_1 =\lfloor n_2^\gamma \rfloor$. Next, it is given a matrix $M$ of size $n_1 \times n_2$ that can be preprocessed. Let $p(n_1 , n_2)$ denote the preprocessing time. After the preprocessing, a sequence of vector pairs $(u^1,v^1), .., (u^{n_3}, v^{n_3})$ is presented one vector pair after another and the task is to compute $(u^t)^{\intercal} Mv^t$, before the pair $(u^{t+1}, v^{t+1})$ arrives. Let $c(n_1 , n_2, n_3 )$ denote the computation time over the whole sequence. The special case where $n_3=1$ is called the $\gamma$-uMv problem.
\end{definition}
They then show that any algorithm solving the $\gamma$-OuMv problem in $\tilde{\tilde{o}}(n_1 n_2 n_3)$ time would give a truly subcubic algorithm to solve Online Vector Multiplication(c.f. Theorem 2.2  \cite{DBLP:journals/corr/HenzingerKNS15}). 

As stated in the introduction, we consider in this section the existential version of nearest colored node that is we only ask whether there exists a colored vertex in the same component as a vertex $v$. As shown by Abboud and Williams \cite{DBLP:journals/corr/AbboudW14} it suffices to prove lower bounds on the worst-case update and query time for a partially-dynamic version of a problem to establish amortized lower bounds for the fully-dynamic version. We thus only prove hardness of the partially-dynamic settings where we reduce from the $\gamma$-uMv problem where we are given a $n_1 \times n_2$ matrix $\mathbf{M}$ to preprocess and only a single pair of vectors $(\mathbf{u}, \mathbf{v})$ arrives.

\begin{lemma}
\label{lemma:partdynalgoquery}
Given any algorithm $\mathcal{A}$ that is able to process updates in $\textsc{u}(n,\sigma) =\tilde{\tilde{o}}(\sigma)$ and queries in $\textsc{q}(n,\sigma)= \tilde{\tilde{o}}(n/\sigma)$ amortized time, we can solve $\gamma$-OuMv in time $\tilde{\tilde{o}}(n_1, n_2, n_3)$. The same lower bounds extend to the worst-case update and query times of the partially-dynamic version of the problem. 
\end{lemma}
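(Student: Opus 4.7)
The plan is to reduce $\gamma$-OuMv directly to the dynamic nearest colored node problem on path graphs, using $\mathcal{A}$ as a black box. An algorithm meeting the claimed amortized bounds would solve the $n_3$-round instance in $\tilde{\tilde{o}}(n_1 n_2 n_3)$ total time, contradicting the OuMv conjecture (Theorem 2.2 of \cite{DBLP:journals/corr/HenzingerKNS15}); the extension to worst-case partially-dynamic times at the end of the lemma then follows from the off-the-shelf Abboud--Williams simulation \cite{DBLP:journals/corr/AbboudW14}, which turns any partially-dynamic worst-case algorithm into a fully-dynamic amortized one at no asymptotic loss. Given $\mathbf{M}\in\{0,1\}^{n_1\times n_2}$, the preprocessing step would construct a graph on $n=\Theta(n_1 n_2)$ vertices consisting of $n_1+1$ vertex-disjoint paths: a trivial path $P_0=\{s\}$ holding the fixed source and, for each row $i\in[n_1]$, a path $P_i$ on the $n_2$ vertices $w_{i,1},\ldots,w_{i,n_2}$ in which $w_{i,j}$ carries color $c_j$ iff $\mathbf{M}_{ij}=1$ and otherwise shares a single dummy color $c_0$ (as does $s$). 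Thus $\sigma=n_2+1$, and the hypothesized bounds specialise to $\textsc{u}(n,\sigma)=\tilde{\tilde{o}}(n_2)$ and $\textsc{q}(n,\sigma)=\tilde{\tilde{o}}(n_1)$, so I can afford $O(n_1)$ updates together with $O(n_2)$ queries per pair.

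To process the pair $(\mathbf{u}^t,\mathbf{v}^t)$, I would (i) for every $i$ with $\mathbf{u}^t_i=1$ insert an edge joining the free endpoint of $P_i$ to the free endpoint of the path component currently containing $s$, so that $P_i$ is spliced onto $s$'s path while the overall graph remains a disjoint union of paths; (ii) for every $j$ with $\mathbf{v}^t_j=1$ ask $\mathcal{A}$ whether color $c_j$ lies in $s$'s component, and output the disjunction of the $O(n_2)$ answers; (iii) delete the edges inserted in (i) to restore the original forest before the next pair arrives. Correctness is immediate: after step (i), $s$'s component is exactly $\{s\}\cup\bigcup_{i:\,\mathbf{u}^t_i=1}V(P_i)$, so color $c_j$ is reachable from $s$ iff some active row $i$ has $\mathbf{M}_{ij}=1$, and the disjunction in (ii) therefore equals $(\mathbf{u}^t)^{\intercal}\mathbf{M}\mathbf{v}^t$. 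The per-pair cost is $O(n_1)\cdot\textsc{u}(n,\sigma)+O(n_2)\cdot\textsc{q}(n,\sigma)=\tilde{\tilde{o}}(n_1 n_2)$, and summing over all $n_3$ pairs gives the claimed $\tilde{\tilde{o}}(n_1 n_2 n_3)$ bound.

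The main technical point I would verify carefully is that every intermediate graph stays inside the class of path graphs on which $\mathcal{A}$ is guaranteed to operate: each insertion in step (i) must connect two endpoints of two distinct path components, and each deletion in step (iii) must cut one of the edges that (i) has just added, so that each operation locally preserves the forest-of-paths invariant. Tracking which endpoint of the growing $s$-path is currently free, and pairing each inserted edge with its later deletion, is routine bookkeeping rather than a genuine obstacle. The only remaining ingredient is the Abboud--Williams transformation, invoked as a black box to lift the amortized fully-dynamic lower bound to a worst-case partially-dynamic one, which completes the lemma.
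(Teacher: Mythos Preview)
Your proposal is correct and follows essentially the same reduction as the paper: build $O(n_1)$ row-paths on $\Theta(n_1 n_2)$ vertices with $\sigma=\Theta(n_2)$ colors, splice the rows indexed by $\mathbf{u}$ onto the source $s$, query the colors indexed by $\mathbf{v}$, then undo and repeat. The only cosmetic differences are that you pad each row to length $n_2$ with a dummy color while the paper keeps only the $1$-entries, and that you give one fully-dynamic construction (insert, query, delete) whereas the paper spells out decremental, incremental and fully-dynamic variants separately; your endpoint-to-endpoint splicing in step~(i) is exactly the paper's incremental ``snake'' trick for keeping the graph a path, and the Abboud--Williams rollback you invoke is the same tool the paper uses to handle the partially-dynamic case.
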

\begin{proof}
We first focus on the decremental setting and extend the proof for the fully-dynamic and incremental version. Recall that we are given a $n_1 \times n_2$ matrix $\mathbf{M}$. We treat the $i$th row of $\mathbf{M}$ as a subset of $[1,n_2]$, i.e. $\mathbf{M}[i,j] = 1$ iff $j \in \mathbf{M}[i]$. We create a graph $G$ with a coloring function $c : V \rightarrow C$, with color set $C = \{c_1, .., c_{n_2}\}$, as follows: We first create a special vertex $s$. For every row $i$, we create for each $j \in \mathbf{M}[i]$ a vertex with c $c_j \in C$ and connect the vertices by linking every two consecutive vertices created. The created component for the $i$th row forms a simple path and is denoted from hereon by $P_i$. We also include an edge from $s$ to the first vertex on the path $P_i$ for every $i$. This completes the preprocessing phase. Clearly, the graph $G$ has at most $O(n_1n_2)$ vertices and as the construction forms a tree, we also have at most $O(n_1n_2)$ edges. The complete set-up is depicted in figure \ref{fig:drawingcondlb}.
\begin{figure}
	\centering
        \begin{tikzpicture}[scale=0.8,darkstyle/.style={circle,draw,fill=gray!40,minimum size=20}]
          \node [darkstyle]  (a1) at (1.5*1,-1.5*1){  $c_1$};
          \node [darkstyle]  (a2) at (1.5*2,-1.5*1){  $c_2$};
          \node [darkstyle]  (a3) at (1.5*4,-1.5*1){  $c_4$};
          \node [darkstyle]  (a4) at (1.5*6,-1.5*1){  $c_6$};

          \draw  (a1)--(a2) ;
          \draw  (a2)--(a3) ;
          \draw  (a3)--(a4) ;

          \node [darkstyle]  (b1) at (1.5*1,-1.5*2){  $c_1$};
          \node [darkstyle]  (b2) at (1.5*3,-1.5*2){  $c_3$};
          \node [darkstyle]  (b3) at (1.5*4,-1.5*2){  $c_4$};
          \node [darkstyle]  (b4) at (1.5*5,-1.5*2){  $c_5$};

          \draw  (b1)--(b2) ;
          \draw  (b2)--(b3) ;
          \draw  (b3)--(b4) ;

          \node [darkstyle]  (c1) at (1.5*3,-1.5*3){  $c_3$};
          \node [darkstyle]  (c2) at (1.5*5,-1.5*3){  $c_5$};

          \draw  (c1)--(c2) ;

          \node [darkstyle]  (d1) at (1.5*1,-1.5*4){  $c_1$};

          \node [darkstyle]  (e1) at (1.5*2,-1.5*5){  $c_2$};
          \node [darkstyle]  (e2) at (1.5*3,-1.5*5){  $c_3$};
          \node [darkstyle]  (e3) at (1.5*4,-1.5*5){  $c_4$};
          \node [darkstyle]  (e4) at (1.5*5,-1.5*5){  $c_5$};
          \node [darkstyle]  (e5) at (1.5*6,-1.5*5){  $c_6$};

          \draw  (e1)--(e2) ;
          \draw  (e2)--(e3) ;
          \draw  (e3)--(e4) ;
          \draw  (e4)--(e5) ;
     
        \node [darkstyle]  (s) at (0,-3){$s$};
     
          \draw[very thick, red, dashed]  (s)--(1, -1.5 * 1)--(a1) ;
          \draw[very thick, red, dashed]  (s)--(1, -1.5 * 2)--(b1) ;
          \draw[very thick, red, dashed]  (s)--(1, -1.5 * 3)--(c1) ;
          \draw[very thick, red, dashed]  (s)--(1, -1.5 * 4)--(d1) ;
          \draw[very thick, red, dashed]  (s)--(1, -1.5 * 5)--(e1) ;

        \end{tikzpicture} 
	\caption[Graph for emulating Matrix Multiplication via the decremental vertex-label connectivity problem.]{Depiction of the set-up of a graph from a $5 \times 6$ matrix. The point $s$ is used as query point and in the decremental case all dashed red edges are initially in the graph and can be deleted depending on $\mathbf{u}$.}
  \label{fig:drawingcondlb}
\end{figure}
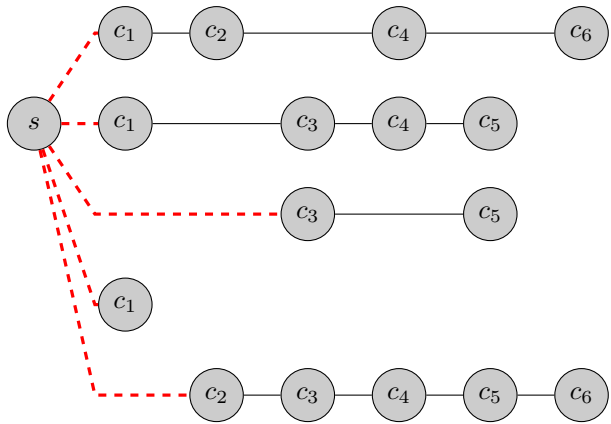

Consider that a vector pair $(\mathbf{u},\mathbf{v})$ arrives. For each $i$ where $\mathbf{u}[i]=0$, we remove the edge from the first vertex in $R_i$ to vertex $s$. This incurs at most $n_1$ updates. We have $\mathbf{u}^\intercal \mathbf{Mv} = 1$ if and only if there exists a $j$ with $\mathbf{v}[j]=1$ where $s$ is still connected to the color $c_j$. To check these connections, we need at most $n_2$ queries. 

As shown by Abboud and Williams \cite{DBLP:journals/corr/AbboudW14}, we can run our algorithm on a machine that records all changes and reverts them after a single pair $(\mathbf{u}, \mathbf{v})$ is processed to the original state in time proportional to the running time since the original state was left. Our graph has at most $\sigma = n_2$ different colors. It is now straight-forward to see that we can solve $\gamma$-OuMv in time $O(n_3 (n_1 \textsc{u}(n_1\sigma, \sigma) + \sigma \textsc{q}(n_1\sigma, \sigma))$ thus the stated bound follows.

In the incremental setting, we omit the edges adjacent to $s$ initially. Then, when $(\mathbf{u},\mathbf{v})$ arrives, let $\mathbf{u}$ have the indices $u_1, u_2, ..$ set to 1. Then, we join the first vertices in $R_i$ and $R_{i+1}$ for even $i$ and the last vertices of $R_i$ and $R_{i+1}$ for odd $i$. Thus, we only construct a path. Connecting $s$ to the end of the path, allows us to run queries as in the decremental setting. For the fully-dynamic setting, we can use the same set-up as in the incremental setting but instead of rolling the edge insertions back in each phase, we can simply run edge deletions to recover the original state implying that we get an amortized bound.
\end{proof}
We underline the generality of our lower bound which establishes that even on path graphs the amortized fully-dynamic problem remains hard. To strengthen our lower bound, we observe that we can decrease the graph size to $O(n_1)$ for directed graphs or if we are given distance estimates of approximation $<5/3$ implying that the query bound in the theorem can be replaced by $\tilde{\tilde{o}}(n)$. We therefore create $\sigma = n_2$ vertices $V_c$, one of each color. Instead of constructing an entire row $R_i$ for the $i$'th set, we construct a single vertex $v_i$ and connect it with edges to vertices in $V_c$ that match elements in $v_i$. We can then run the algorithm as before. If we direct the edge from $s$ to each $v_i$ and from each $v_i$ towards the vertices in $V_c$ our algorithm works as before. For undirected graphs, we have that $\mathbf{dist}(v_i, c_j) = 3$ iff $\mathbf{u}^\intercal \mathbf{Mv} = 1$ and otherwise $\mathbf{dist}(v_i, c_j) \geq 5$. Thus any approximation of factor smaller $5/3$ is still sufficient to distinguish the two cases. Clearly the graph has $O(n_1+n_2) = O(n_1)$ vertices as $\sigma \leq n$. 

We point out that the underlying $OMv$-conjecture even applies in case of error probability $1/3$ thus our lower bound applies even to Monte-Carlo algorithms. Interestingly, the directed incremental version of our problem can be seen as a graph version of Pagh's problem (we follow the definition from \cite{patrascu10mp-3sum}). In Pagh's problem, we are given a collection $\mathcal{C}$ of $k$ sets $C_1, C_2,.., C_k \subseteq [n]$. We are then allowed to update by providing two indices $i,j \in \{1,..,k\}$ adding the set $C_i \cap C_j$ to $\mathcal{C}$. We then want to be able to query given $x \in [n]$, $i \in \{1,..,k\}$ if $x \in C_i$. Similarly to the proof let us assume that each set $C_i$ is represented by a path $P_i$ containing a vertex of color $c$ for each $c \in \overline{C_i}$. Then updates for $i,j \in \{1,..,k\}$ can be implemented by adding a new vertex and an edge from it to the beginning of $P_i$ and one to the first vertex in $P_j$. Queries can be implemented by asking whether a node of color $x \in [n]$ can be reached from the first node of $P_i$ and by returning the negated answer.

\subparagraph*{Acknowledgements.}

I want to thank Christian Wulff-Nilsen for inspiring the research on this problem and the guidance during the project.

\appendix

%%
%% Bibliography
%%

%% Either use bibtex (recommended), 

\bibliography{nearestColoredNode}

\begin{thebibliography}{10}

\bibitem{DBLP:journals/corr/AbboudW14}
Amir Abboud and Virginia~Vassilevska Williams.
\newblock Popular conjectures imply strong lower bounds for dynamic problems.
\newblock In {\em Foundations of Computer Science (FOCS), 2014 IEEE 55th Annual
  Symposium on}, pages 434--443. IEEE, 2014.

\bibitem{DBLP:journals/corr/AbrahamCEFN17}
Ittai Abraham, Shiri Chechik, Michael Elkin, Arnold Filtser, and Ofer Neiman.
\newblock Ramsey spanning trees and their applications.
\newblock In {\em Proceedings of the Twenty-Ninth Annual ACM-SIAM Symposium on
  Discrete Algorithms}, pages 1650--1664. SIAM, 2018.

\bibitem{Bartal:2003:MRP:780542.780610}
Yair Bartal, Nathan Linial, Manor Mendel, and Assaf Naor.
\newblock On metric ramsey-type phenomena.
\newblock In {\em Proceedings of the Thirty-fifth Annual ACM Symposium on
  Theory of Computing}, STOC '03, pages 463--472, New York, NY, USA, 2003. ACM.
\newblock URL: \url{http://doi.acm.org/10.1145/780542.780610}, \href
  {http://dx.doi.org/10.1145/780542.780610} {\path{doi:10.1145/780542.780610}}.

\bibitem{Belazzougui:2015:OLU:2756876.2629339}
Djamal Belazzougui and Gonzalo Navarro.
\newblock Optimal lower and upper bounds for representing sequences.
\newblock {\em ACM Trans. Algorithms}, 11(4):31:1--31:21, April 2015.
\newblock URL: \url{http://doi.acm.org/10.1145/2629339}, \href
  {http://dx.doi.org/10.1145/2629339} {\path{doi:10.1145/2629339}}.

\bibitem{Bender:2000:LPR:646388.690192}
Michael~A. Bender and Martin Farach-Colton.
\newblock The lca problem revisited.
\newblock In {\em Proceedings of the 4th Latin American Symposium on
  Theoretical Informatics}, LATIN '00, pages 88--94, London, UK, UK, 2000.
  Springer-Verlag.
\newblock URL: \url{http://dl.acm.org/citation.cfm?id=646388.690192}.

\bibitem{DBLP:journals/corr/abs-1109-3114}
Shiri Chechik.
\newblock Improved distance oracles and spanners for vertex-labeled graphs.
\newblock In {\em Proceedings of the 20th Annual European Conference on
  Algorithms}, ESA'12, pages 325--336, Berlin, Heidelberg, 2012.
  Springer-Verlag.
\newblock URL: \url{http://dx.doi.org/10.1007/978-3-642-33090-2\_29}, \href
  {http://dx.doi.org/10.1007/978-3-642-33090-2\_29}
  {\path{doi:10.1007/978-3-642-33090-2\_29}}.

\bibitem{DBLP:journals/corr/abs-1305-3314}
Shiri Chechik.
\newblock Approximate distance oracles with constant query time.
\newblock In {\em Proceedings of the forty-sixth annual ACM symposium on Theory
  of computing}, pages 654--663. ACM, 2014.

\bibitem{chechik2015approximate}
Shiri Chechik.
\newblock Approximate distance oracles with improved bounds.
\newblock In {\em Proceedings of the Forty-Seventh Annual ACM on Symposium on
  Theory of Computing}, pages 1--10. ACM, 2015.

\bibitem{dietzfelbinger1994dynamic}
Martin Dietzfelbinger, Anna Karlin, Kurt Mehlhorn, Friedhelm Meyer auF~der
  Heide, Hans Rohnert, and Robert~E Tarjan.
\newblock Dynamic perfect hashing: Upper and lower bounds.
\newblock {\em SIAM Journal on Computing}, 23(4):738--761, 1994.

\bibitem{Fischer2006}
Johannes Fischer and Volker Heun.
\newblock {\em Theoretical and Practical Improvements on the RMQ-Problem, with
  Applications to LCA and LCE}, pages 36--48.
\newblock Springer Berlin Heidelberg, Berlin, Heidelberg, 2006.
\newblock URL: \url{http://dx.doi.org/10.1007/11780441\_5}, \href
  {http://dx.doi.org/10.1007/11780441\_5} {\path{doi:10.1007/11780441\_5}}.

\bibitem{gawrychowski2016nearest}
Pawel Gawrychowski, Gad~M Landau, Shay Mozes, and Oren Weimann.
\newblock The nearest colored node in a tree.
\newblock In {\em LIPIcs-Leibniz International Proceedings in Informatics},
  volume~54. Schloss Dagstuhl-Leibniz-Zentrum fuer Informatik, 2016.

\bibitem{DBLP:journals/corr/HenzingerKNS15}
Monika Henzinger, Sebastian Krinninger, Danupon Nanongkai, and Thatchaphol
  Saranurak.
\newblock Unifying and strengthening hardness for dynamic problems via the
  online matrix-vector multiplication conjecture.
\newblock In {\em Proceedings of the forty-seventh annual ACM symposium on
  Theory of computing}, pages 21--30. ACM, 2015.

\bibitem{Hermelin2011}
Danny Hermelin, Avivit Levy, Oren Weimann, and Raphael Yuster.
\newblock {\em Distance Oracles for Vertex-Labeled Graphs}, pages 490--501.
\newblock Springer Berlin Heidelberg, Berlin, Heidelberg, 2011.
\newblock URL: \url{http://dx.doi.org/10.1007/978-3-642-22012-8\_39}, \href
  {http://dx.doi.org/10.1007/978-3-642-22012-8\_39}
  {\path{doi:10.1007/978-3-642-22012-8\_39}}.

\bibitem{laish2017efficient}
Itay Laish and Shay Mozes.
\newblock Efficient approximate distance oracles for vertex-labeled planar
  graphs.
\newblock {\em arXiv preprint arXiv:1707.02414}, 2017.

\bibitem{DBLP:journals/corr/abs-cs-0511084}
Manor Mendel and Assaf Naor.
\newblock Ramsey partitions and proximity data structures.
\newblock In {\em Foundations of Computer Science, 2006. FOCS'06. 47th Annual
  IEEE Symposium on}, pages 109--118. IEEE, 2006.

\bibitem{DBLP:journals/corr/MozesS15}
Shay Mozes and Eyal~E Skop.
\newblock Efficient vertex-label distance oracles for planar graphs.
\newblock {\em Theory of Computing Systems}, 62(2):419--440, 2018.

\bibitem{patrascu10mp-3sum}
Mihai P{\v a}tra{\c s}cu.
\newblock Towards polynomial lower bounds for dynamic problems.
\newblock In {\em Proc. 42nd ACM Symposium on Theory of Computing (STOC)},
  pages 603--610, 2010.

\bibitem{DBLP:journals/corr/abs-cs-0603043}
Mihai P{\u{a}}tra{\c{s}}cu and Mikkel Thorup.
\newblock Time-space trade-offs for predecessor search.
\newblock In {\em Proceedings of the thirty-eighth annual ACM symposium on
  Theory of computing}, pages 232--240. ACM, 2006.

\bibitem{patracscu2007randomization}
Mihai Pǎtra{\c{s}}cu and Mikkel Thorup.
\newblock Randomization does not help searching predecessors.
\newblock In {\em Proceedings of the eighteenth annual ACM-SIAM symposium on
  Discrete algorithms}, pages 555--564. Society for Industrial and Applied
  Mathematics, 2007.

\bibitem{roditty2005deterministic}
Liam Roditty, Mikkel Thorup, and Uri Zwick.
\newblock Deterministic constructions of approximate distance oracles and
  spanners.
\newblock In {\em International Colloquium on Automata, Languages, and
  Programming}, pages 261--272. Springer, 2005.

\bibitem{Thorup:2001:CRS:378580.378581}
Mikkel Thorup and Uri Zwick.
\newblock Compact routing schemes.
\newblock In {\em Proceedings of the Thirteenth Annual ACM Symposium on
  Parallel Algorithms and Architectures}, SPAA '01, pages 1--10, New York, NY,
  USA, 2001. ACM.
\newblock URL: \url{http://doi.acm.org/10.1145/378580.378581}, \href
  {http://dx.doi.org/10.1145/378580.378581} {\path{doi:10.1145/378580.378581}}.

\bibitem{Thorup:2005:ADO:1044731.1044732}
Mikkel Thorup and Uri Zwick.
\newblock Approximate distance oracles.
\newblock {\em J. ACM}, 52(1):1--24, January 2005.
\newblock URL: \url{http://doi.acm.org/10.1145/1044731.1044732}, \href
  {http://dx.doi.org/10.1145/1044731.1044732}
  {\path{doi:10.1145/1044731.1044732}}.

\bibitem{tsur2016succinct}
Dekel Tsur.
\newblock Succinct data structures for nearest colored node in a tree.
\newblock {\em Information Processing Letters}, 132:6--10, 2018.

\bibitem{williams2010subcubic}
Virginia~Vassilevska Williams and Ryan Williams.
\newblock Subcubic equivalences between path, matrix and triangle problems.
\newblock In {\em Foundations of Computer Science (FOCS), 2010 51st Annual IEEE
  Symposium on}, pages 645--654. IEEE, 2010.

\bibitem{DBLP:journals/corr/abs-1202-2336}
Christian Wulff-Nilsen.
\newblock Approximate distance oracles with improved query time.
\newblock In {\em Proceedings of the twenty-fourth annual ACM-SIAM symposium on
  Discrete algorithms}, pages 539--549. SIAM, 2013.

\end{thebibliography}

%% .. or use the thebibliography environment explicitely

\section{Proof of Lemma \ref{lem:sizebunch}}

\begin{lemma}
 We use at most space $O(kn\sigma^{1/k})$ to represent the Color Distance Oracle and construction time $O(m\sigma)$.
\end{lemma}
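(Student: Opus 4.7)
The plan is to decompose the oracle into its constituent pieces and bound each separately; the analysis essentially reproduces Chechik's argument from~\cite{DBLP:journals/corr/abs-1109-3114}, with one extra line accounting for the RMQ structures, which are the only change we make to the construction.

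First I would dispatch the easy terms. For every $v \in V$ we store the $k$ pivots $p_0(v), \ldots, p_{k-1}(v)$ together with the distances $\mathbf{dist}(v, p_i(v))$, contributing $O(kn)$ words. The array $P_v[0..k-2]$ occupies $O(k)$ words per vertex, so $O(kn)$ across all vertices. Augmenting each $P_v$ with an RMQ oracle~\cite{Bender:2000:LPR:646388.690192, Fischer2006} costs additional space linear in the array length, i.e.\ $O(k)$ per vertex and $O(kn)$ overall; this is the one bookkeeping term absent from Chechik's analysis.

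The main work is in bounding the hash tables that realise the color bunches. Since every vertex belongs to a unique color class,
\[
\sum_{c \in C} |B(c)| \;\leq\; \sum_{c \in C}\sum_{v \in V_c} |B(v)| \;=\; \sum_{v \in V} |B(v)|,
\]
so it suffices to bound the expected total size of the vertex bunches. I would fix $v$ and a level $i$, order the vertices of $A_i$ by increasing distance from $v$ as $u_1, u_2, \ldots$, and observe that $u_j \in B(v)\cap A_i$ precisely when none of $u_1,\ldots, u_j$ was resampled into $A_{i+1}$. Since membership in $A_{i+1}$ is independent across these vertices with probability $\sigma^{-1/k}$, the expected count is bounded by the geometric series $\sum_{j\geq 1}(1-\sigma^{-1/k})^{j} \leq \sigma^{1/k}$. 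Summing over $i \in \{0, \ldots, k-1\}$ yields $E[|B(v)|] = O(k\sigma^{1/k})$, and since the hash tables use constant space per entry, the expected total space for the bunches is $O(kn\sigma^{1/k})$, dominating the previous contributions.

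For the construction time I would piggyback on Chechik: the pivots and their distances across all levels can be computed in $O(km)$ time via Dijkstra/BFS from a super source over each $A_i$, and the bunches in $O(m)$ per color by a source-restricted shortest-path computation from a super source attached to $V_c$, for $O(m\sigma)$ in total; building the RMQ structures takes only $O(k)$ per vertex and is absorbed. The only step requiring real care is the geometric-series bound on $|B(v)\cap A_i|$, but it is identical to the classical Thorup--Zwick analysis~\cite{Thorup:2005:ADO:1044731.1044732} and raises no genuinely new obstacle.
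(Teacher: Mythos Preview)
Your geometric-series bound is correct for levels $i=0,\ldots,k-2$, but it breaks at the last level. Since $A_k=\emptyset$, every vertex of $A_{k-1}$ lies in $B(v)$, so $|B(v)\cap A_{k-1}|=|A_{k-1}|$ deterministically, and $E[|A_{k-1}|]=n\sigma^{-(k-1)/k}$, not $\sigma^{1/k}$. When $\sigma$ is small (say $\sigma=n^{\epsilon}$ with $\epsilon<1$) this term is $\Theta(n^{1-\epsilon(k-1)/k})\gg \sigma^{1/k}$, so your claim $E[|B(v)|]=O(k\sigma^{1/k})$ is false, and consequently $\sum_{v}|B(v)|$ can be as large as $\Theta(n^2\sigma^{-(k-1)/k})$, far exceeding $kn\sigma^{1/k}$.

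The fix, which is exactly what the paper (following Chechik) does, is to exploit that $B(c)=\bigcup_{v\in V_c}B(v)$ is a \emph{union}: since $A_{k-1}\subseteq B(v)$ for every $v$, the set $A_{k-1}$ contributes only $|A_{k-1}|$ to $|B(c)|$, not $|V_c|\cdot|A_{k-1}|$. Hence one should split off the last level and write
\[
\sum_{c\in C}E[|B(c)|]\;\le\;\sum_{v\in V}E[|B(v)\setminus A_{k-1}|]\;+\;\sum_{c\in C}E[|A_{k-1}|]\;=\;(k-1)n\sigma^{1/k}+\sigma\cdot n\sigma^{-(k-1)/k}\;=\;kn\sigma^{1/k}.
\]
Your first inequality $\sum_c|B(c)|\le\sum_v|B(v)|$ throws away precisely this saving and is too lossy whenever $\sigma=o(n)$. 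The rest of your write-up (pivots, $P_v$, RMQ, construction time) is fine and matches the paper.
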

\begin{proof}
We first want to prove that $\sum_{c \in C} |B(c)| = O(kn\sigma^{1/k})$. We follow the analysis by Thorup and Zwick to bound the bunch sizes for a vertex $v \in V$ but omit the last layer that we account for in a later step. They show that the expected size $E[|B(v) \cap (A_i \setminus A_{i+1})|] = \sigma^{1/k}$, for $1 \leq i \leq k - 2$ and that $E[|A_{k-1}|] = n\sigma^{(k-1)/k}$. By linearity of expectation, we derive $E[|\bigcup_{i < k - 1} B(v) \cap  (A_i \setminus A_{i+1})|] = E[|B(v) \setminus A_{k-1}|] = (k-1)\sigma^{1/k}$. Now, observe that $A_{k-1} \setminus A_k = A_{k-1}$, therefore $A_{k-1} \subseteq B(v)$ for every bunch $B(v)$. But we only need to account for the set $A_{k-1}$ once per bunch $B(c)$. We obtain 
  \begin{align*}
  \sum_{c \in C} E[|B(c)|] &\leq  \sum_{c \in C}\left(\sum_{v \in V_c} E[|B(v) \setminus A_{k-1}|] + E[|A_{k-1}|]\right) \\ &=  \sum_{v \in V} E[|B(v) \setminus A_{k-1}|] +  \sum_{c \in C} E[|A_{k-1}|] \\&=  (k-1) n\sigma^{1/k} +  n\sigma^{1/k} = kn\sigma^{1/k} 
  \end{align*}
As a hash map only takes linear space in the number of entries, the hash tables take space proportional to the number of elements in the bunches. We also have to account for storing the RMQ structures but as they take linear space in the size of the arrays this uses at most $O(kn)$ space overall which is subsumed in $O(kn\sigma^{1/k})$. All distances can be precomputed by computing a look-up table in $\tilde{O}(m\sigma)$ time which dominates the construction costs for large $\sigma$.
\end{proof}

\section{Proof of Lemma \ref{lmm:treecontainlambda}}

\begin{lemma}
We can maintain a data structure over a tree $T=(V,E)$ with function $c : V \rightarrow C$ as defined before, that given $v \in V, c \in C$ finds the nearest ancestor of $v$ in $T$ that has a $c$-colored vertex in its subtree and that is able to process color-reassignments. Both operations take $O(\log\log{n})$ worst-case time and the data structure requires $O(n)$ space.
\end{lemma}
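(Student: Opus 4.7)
The plan is to reduce the question ``which ancestor of $v$ is the shallowest one whose subtree still contains a $c$-colored vertex?'' to a one-dimensional predecessor query, using a DFS numbering of $T$ to encode subtree-membership as an interval. I preprocess $T$ by running a DFS and recording each vertex's entry number $in(v)$, so that the subtree of $v$ is exactly the contiguous range $[in(v),out(v)]$ of DFS-numbers, and I build the constant-time linear-space LCA structure of the preliminaries together with a depth array. For every color $c$ currently present in $T$, I maintain a dynamic P\v{a}tra\c{s}cu--Thorup predecessor structure $D_c$ over the set $\{in(u):c(u)=c\}$, and I keep these $D_c$ in a dynamic hash table indexed by $c$, so that only colors actually appearing in $T$ occupy space.

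To answer a query $(v,c)$, first return $v$ itself if $c(v)=c$; otherwise look up $D_c$ in the hash table, let $u_1,u_2$ be the vertices whose DFS-numbers equal $\textsc{Pred}_{D_c}(in(v))$ and $\textsc{Succ}_{D_c}(in(v))$ respectively (either may be absent, and if both are absent no $c$-colored vertex exists in $T$), and return whichever of $\textsc{Lca}(v,u_1)$ and $\textsc{Lca}(v,u_2)$ is deeper. For a color-reassignment $c(v) \gets c'$, delete $in(v)$ from $D_{c(v)}$, discarding $D_{c(v)}$ from the hash table if it just became empty, and insert $in(v)$ into $D_{c'}$, allocating a fresh $D_{c'}$ in the hash table if $c'$ was previously unused.

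The one step that actually needs an argument is correctness of the query. Let $a$ be the nearest ancestor of $v$ whose subtree contains some $c$-colored vertex $u^*$. The interval $[in(a),out(a)]$ contains both $in(v)$ and $in(u^*)$; whichever of $u_1$ or $u_2$ lies on the same side of $in(v)$ as $u^*$ therefore also has its DFS-number inside $[in(a),out(a)]$, so this neighbor is itself in the subtree of $a$. Its LCA with $v$ is then an ancestor of $v$ lying inside the subtree of $a$, i.e. a (non-strict) descendant of $a$, and the minimality of $a$ forces this LCA to equal $a$; since the other candidate LCA is also an ancestor of $v$ and hence at most as deep as $a$, taking the deeper of the two candidates returns $a$ as required. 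I expect this small combinatorial claim about DFS-neighbors to be the only non-routine ingredient. Everything else is immediate: a query does two predecessor lookups and two $O(1)$ LCA/depth queries, an update does two predecessor modifications and $O(1)$ hash work, both in $O(\log\log n)$ worst-case time; and since each vertex contributes exactly one element to exactly one $D_c$, the aggregate size of all predecessor structures, together with the DFS and LCA arrays and the hash table of at most $n$ active colors, is $O(n)$.
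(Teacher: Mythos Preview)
Your proposal is correct and follows essentially the same approach as the paper: maintain a predecessor structure per color over a vertex numbering of $T$, answer a query by taking the predecessor and successor of $v$'s number and returning the deeper of the two LCAs, and handle reassignments by one deletion and one insertion. You even supply the correctness argument that the paper dismisses as ``straight-forward,'' and your use of DFS entry times is the right choice---the paper's proof says ``BFS,'' which appears to be a slip, since under BFS numbering subtrees are not contiguous intervals and the predecessor/successor trick can fail.
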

\begin{proof}
We initialize the data structure by augmenting the tree $T$ with a LCA data structure. We then number the vertices in the order they occur in a BFS and denote the assigned number by $\lambda(v)$ for $v \in V$. Finally, we construct for each color a predecessor data structure and insert the numbers of all vertices of color $c \in C$ for the corresponding data structure. As a predecessor structure takes only space linear in the number of elements the size of all predecessor structures sums to $O(n)$. Given a query from a vertex $v \in V$ for color $c$, we use the predecessor structure for color $c$ and search for the predecessor $p$ and successor $s$ of $\lambda(v)$. It is straight-forward to see that the ancestor $a \in \{\textsc{Lca}(v,p), \textsc{Lca}(v,s)\}$ at lower depth gives the desired ancestor. For updating the color $c$ to $c'$ for vertex $v$, we remove $\lambda(v)$ from the predecessor structure for color $c$ and insert it into the structure for color $c'$. 
\end{proof}

\end{document}